\pdfoutput=1

\documentclass[12pt]{article}

\usepackage[square,sort,comma,numbers]{natbib} 
\bibliographystyle{plain}		
\usepackage{hyperref}
\hypersetup{colorlinks,citecolor=blue,linkcolor=blue}

\usepackage{enumerate,tikz}										
\usepackage{ragged2e}

\usepackage[labelfont=bf]{caption}

\usepackage{amsmath,amsfonts,amssymb,amsthm,booktabs,color,epsfig,graphicx,url}

\theoremstyle{plain}
\newtheorem{thm}{Theorem}

\newtheorem{prop}{Proposition}
\newtheorem{lemma}{Lemma}
\newtheorem{cor}{Corollary}

\theoremstyle{definition}

\newtheorem{exa}{Example}


\newcommand{\aaa}{\mathbf{a}}

\newcommand{\dd}{\mathbf{d}}

\newcommand{\uu}{\mathbf{u}}
\newcommand{\vv}{\mathbf{v}}
\newcommand{\w}{\mathbf{w}}
\newcommand{\x}{\mathbf{x}}
\newcommand{\y}{\mathbf{y}}
\newcommand{\zz}{\mathbf{z}}
\newcommand{\bta}{\boldsymbol{\beta}}

\newcommand{\eeps}{\boldsymbol{\epsilon}}
\newcommand{\veeps}{\boldsymbol{\varepsilon}}

\newcommand{\tht}{\boldsymbol{\theta}}

\newcommand{\A}{\mathbf{A}}
\newcommand{\B}{\mathbf{B}}
\newcommand{\C}{\mathbf{C}}

\newcommand{\HH}{\mathbf{H}}
\newcommand{\I}{\mathbf{I}}
\newcommand{\J}{\mathbf{J}}
\newcommand{\KK}{\mathbf{K}}
\newcommand{\PP}{\mathbf{P}}
\newcommand{\Q}{\mathbf{Q}}

\newcommand{\V}{\mathbf{V}}

\newcommand{\X}{\mathbf{X}}

\newcommand{\Z}{\mathbf{Z}}

\newcommand{\LAM}{\boldsymbol{\Lambda}}
\newcommand{\OMEGA}{\boldsymbol{\Omega}}
\newcommand{\PSI}{\boldsymbol{\Psi}}



\DeclareMathOperator*{\argmax}{argmax}
\DeclareMathOperator*{\argmin}{argmin}

\newcommand{\tr}{\text{\rm tr}}
\newcommand{\rk}{{\text{\rm rk}}}

\newcommand{\Prob}{{{\rm Pr}}}
\newcommand{\ThetaC}{\Theta}

\makeatletter \let\@fnsymbol\@arabic \makeatother 					

\allowdisplaybreaks
\sloppy
 \setlength{\parindent}{0cm}
 \setlength{\parskip}{0.2em}
 \setlength{\paperheight}{29.7cm}
 \setlength{\paperwidth}{21cm}
 \setlength{\textheight}{23.3cm}
 \setlength{\textwidth}{16cm}
 \setlength{\headheight}{0cm}
 \setlength{\headsep}{0cm}
 \setlength{\topskip}{0cm}
 \setlength{\topmargin}{0cm}
 \setlength{\oddsidemargin}{0cm}
 \setlength{\bibhang}{0.5cm}


\title{Uniformly valid inference based on the Lasso \\in linear mixed
models} 
\date{} 
\author{Peter Kramlinger\thanks{Department of Statistics, University of California Davis, One Shields Ave, Davis, CA 95616, United States}
\and Ulrike Schneider\thanks{ulrike.schneider@tuwien.ac.at, Institute
of Statistics and Mathematical Methods in Economics, Technische
Universit\"at Wien, Wiedner Hauptstr. 8, 1040 Wien, Austria}
\and Tatyana Krivobokova\thanks{tatyana.krivobokova@univie.ac.at,
Department of Statistics and Operations Research, Universit\"at Wien,
Oskar-Morgenstern-Platz 1, 1090 Wien, Austria} }

\begin{document}

\maketitle
\setcounter{footnote}{3}

\begin{abstract}
Linear mixed models (LMMs) are suitable for clustered data and are
common in biometrics, medicine, survey statistics, and many other
fields. In those applications, it is essential to carry out valid
inference after selecting a subset of the available variables. We
construct confidence sets for the fixed effects in Gaussian LMMs that
are based on Lasso-type estimators. Aside from providing confidence
regions, this also allows for quantification of the joint uncertainty
of both variable selection and parameter estimation in the procedure.
To show that the resulting confidence sets for the fixed effects are
uniformly valid over the parameter spaces of both the regression
coefficients and the covariance parameters, we also prove the novel
result on uniform Cram\'er consistency of the restricted maximum
likelihood (REML) estimators of the covariance parameters. The
superiority of the constructed confidence sets to na\"ive
post-selection procedures is validated in simulations and illustrated
with a study of the acid-neutralization capacity of lakes in the
United States.
\bigskip

\noindent
\emph{MSC 2020 subject classification}: Primary 62F25; 
secondary 62J10; 
62J07. 

\noindent
\emph{Keywords and phrases}:  confidence sets, REML, sparsity, variance components, uniform consistency.
\end{abstract}

\section{Introduction} 

Linear mixed models (LMMs) are regression models that incorporate
dependency structures that occur in clustered data. They are widely
applied in many empirical sciences ranging from genetics
\citep{Henderson1950} to survey statistics \citep{Pfefferman2013} and
many more. Comprehensive reviews can be found in \citep{Demidenko2004, Pinheiro2000}. The standard LMM can be written as
\begin{equation}
\begin{gathered}\label{LMM}
\y_i = \X_i\bta_0 + \Z_i \vv_i + \veeps_i,\;\;i\in\{1,\ldots,m\}, \quad
\veeps_i \sim
\mathcal{N}_{n_i}\{\boldsymbol{0}_{n_i},\OMEGA_i(\tht_0)\}, \quad
\vv_i \sim \mathcal{N}_{q}\{\boldsymbol{0}_{q},\PSI(\tht_0)\},
\end{gathered}
\end{equation}
with observations $\y_i\in\mathbb{R}^{n_i}$, known fixed covariates
$\X_i\in\mathbb{R}^{n_i\times p}$, $\Z_i\in\mathbb{R}^{n_i\times q}$,
and $\vv_i\in\mathbb{R}^{q}$ and $\veeps_i\in\mathbb{R}^{n_i}$ being
independent random variables and each independently
distributed for all clusters $i\in\{1,\dots, m\}$. 
The regression coefficient vector $\bta_0$ is referred to as fixed effects,
whereas $\vv_i$ are the random effects. While all observations share
the same regression coefficients in the fixed effects, only $n_i$
observations in each cluster share the same realization of $\vv_i$ in
the random effects. The latter can be associated, e.g., with subject-specific effects
in medical trials or group effects in longitudinal studies. The
regression coefficients $\bta_0\in\mathbb{R}^{p}$ and covariance parameters
$\tht_0 \in \ThetaC \subset \mathbb{R}_{>0}^r$ are unknown and have to
be estimated. Consistency of parameter estimators for $\bta_0$ and
$\tht_0$ is ensured if the number of clusters $m$ tends to
infinity, whereas the sample size per cluster $n_i$ may stay bounded
or grow as well \citep[Section 3.6.2]{Demidenko2004}.

In practice, often both model selection and estimation need to be
performed. Na\"ive two-stage methods for inference in regression
models consist of a variable selection step, which is based, e.g., on
an information criterion, and subsequent estimation of the parameters
and inference in the obtained model in a second step. However, this
na\"ive inferential theory ignores the first model selection step and
does not account for the additional uncertainty introduced by the
selection procedure. This issue has been laid out in detail in
\citep{Poetscher1991}. In recent years, different approaches to this
problem have emerged in the literature. The PoSi or post-selection
inference framework of \citep{Berk2013} provides a general procedure to
construct confidence regions for a best-approximation parameter that
depends on the selected model. In that regard, a similar perspective
was taken by \citep{Lee2016} who also aim to cover this surrogate
quantity rather than a `true' parameter and introduce the idea of
selective inference with a focus on the Lasso. This selective
inference approach, where the coverage holds conditionally on the
selected model, is followed in \citep{Ruegamer2022} specifically for
LMMs. Following a similar paradigm in the sense of conditional
inference, \citep{Charkhi2018} suggest an approach based on the Akaike
Information Criterion (AIC) which is carried out for LMMs in
\citep{Claeskens2021}. Contrasting this, a more classical perspective
is taken in the general debiasing approach of \citep{VdGeerEtAl14} and
related articles, e.g., \citep{Li2021} for LMMs, where the target is (a
sub-vector of) the true parameter and the coverage probability holds
unconditionally. We follow this classical approach in that we also
consider unconditional inference with the aim to cover the true
parameter. All the methods listed above are motivated from a
high-dimensional perspective. In our work, we focus on a
low-dimensional setting, that is, when the number of variables is
smaller than the number of observations.

In this work, we employ the least absolute shrinkage and selection
operator (Lasso) \citep{Tibshirani1996}, which offers a single step
approach to simultaneous variable selection and estimation for the
regression coefficients. Although in the context of the linear mixed
models, the Lasso can be applied to the selection of both fixed and
random effects \citep[]{Ibrahim2011,Bondell2010,Peng2012,Mueller2013},
we consider selection and inference for $\bta_0$ only. Selection of
covariates $\bta_0$ is fundamental to interpreting the association
between the set of covariates and the response, which is of high
relevance in most practical problems.


The usefulness and wide application of Lasso sparked interest in how
to construct confidence intervals based on the resulting Lasso
estimators. Not too surprisingly, the model selection step inherent to
the Lasso estimator leads to challenges with respect to inference. For
the Lasso and related estimators, this manifests in the general
difficulty that its finite sample and, in an appropriate framework,
also the asymptotic distribution depends on the unknown coefficient
parameters $\bta_0$ in an intricate manner \citep{PoetscherLeeb2009}.
This is problematic since honest confidence sets in the sense of
\citep{Li1989} have to attain nominal coverage over the whole parameter
space, which has been investigated in detail for the Lasso in the case of
orthogonal covariates in \citep{PoetscherSchneider10}. For a general
low-dimensional framework, where the number of parameters is smaller
than the sample size, \citep{Ewald2018} have obtained exact and
uniformly valid confidence sets. All these references consider ordinary
linear models and do not cover the case of linear mixed models.

The contribution of the present article to the LMM framework is
two-fold. First, we show that the covariance parameter $\tht_0$ can be
estimated Cram\'er-consistently in a {uniform} manner. Second, we
construct confidence regions for $\bta_0$ based on the Lasso which are
uniformly valid over the space of coefficient and covariance
parameters. Unlike an ordinary linear model, an LMM requires the
estimation of covariance parameters $\tht_0$ in addition to the
coefficient parameters $\bta_0$. The key idea is to estimate the
former with restricted maximum likelihood (REML) as this procedure
allows to separate the estimation of $\tht_0$ and $\bta_0$. A novel
uniform consistency result for the REML estimators is established,
extending the previous result of \citep{Weiss1971}. This new result is
crucial to non-trivially extend the reasoning of \citep{Ewald2018} to
construct confidence sets for $\bta_0$ that hold the coverage not
only uniformly over the space of coefficient parameters but also over
the space of covariance parameters. Thus, researchers are provided
with a method that allows valid inference based on the Lasso in LMMs.
Our analysis is provided within a low-dimensional framework where $p
< n$.

The remainder of the article is structured as follows. First, we
specify the setting and regularity conditions and state the estimation
procedure for the fixed effects $\bta_0$ in
Section~\ref{sec:setting_and_ass}. Next, in Section~\ref{sec:reml},
the estimation of covariance parameters $\tht_0$ is discussed and its
uniform consistency is established. In Section~\ref{sec:main_results},
the confidence regions for $\bta_0$ based on the Lasso are presented.
Their usefulness and limitations, and, in particular, their
superiority to na\"ive approaches is
demonstrated in a simulation study in Section~\ref{sec:sims}. A real
data example is provided in Section~\ref{sec:app}. The article
concludes with a discussion in Section~\ref{sec:disc}.

Throughout the remainder of the article, we use the following notation.
For a matrix $\A$, $\tr(\A)$ and $\rk(\A)$ denote the trace and rank
of $\A$, respectively. By $\eta_i(\A)$, we refer to the $i$-th
eigenvalue of $\A$, sorted in descending order. The symbol
$\|\mathbf{A}\|$ denotes the Frobenius norm, i.e., $\|\mathbf{A}\|^2 =
\tr(\mathbf{A}^\top\mathbf{A})$. Finally, for $\A_i \in
\mathbb{R}^{n_i\times m_i}$, $\mbox{blockdiag}(\A_i)_{i=1}^n$ denotes
the $({{\sum_{i=1}^n n_i}\times{\sum_{i=1}^n m_i}})$-dimensional
block-diagonal matrix with $\A_i$ along the diagonal block and zeroes
elsewhere.


\section{Setting and regularity conditions}
\label{sec:setting_and_ass} %

Consider a linear model with $n\in\mathbb{N}$ dependent observations
and model equation
\begin{equation} \label{GenLM}
\y = \X \bta_0 + \eeps, \quad
\eeps \sim \mathcal{N}_n\big\{\boldsymbol{0}_n, \V(\tht_0) \big\}.
\end{equation}
The covariance matrix $\V(\cdot) \in \mathbb{R}^{n \times n}$ models
the dependency amongst the observations. The vectors $\bta_0 \in
\mathbb{R}^p$ as well as $\tht_0 \in \ThetaC \subset
\mathbb{R}_{>0}^r$ are unknown and remain to be estimated. Model
(\ref{LMM}) can be represented as (\ref{GenLM}) with $n = \sum_{i=1}^m
n_i$, $\y = (\y_{n_1},\ldots,\y_{n_m})^\top$, $\X = (\X_1^\top,\ldots,
\X_m^\top)^\top$ and $\V(\tht_0) = \mbox{blockdiag}\{\Z_i \PSI(\tht_0)
\Z_i^\top + \OMEGA_i(\tht_0)\}_{i=1}^m$. 
For given tuning parameters $\lambda_1,\dots,\lambda_p$ and a REML
estimator $\hat\tht$ for $\tht_0$ (defined by (\ref{REML}) in
Section~\ref{sec:reml}), consider the Lasso estimator
\begin{equation} \label{Lasso}
\hat\bta_L = \argmin_{\bta\in\mathbb{R}^p} 
\left\{ \left\|\V(\hat\tht)^{-1/2}\left(\y - \X\bta\right)\right\|^2
+ 2\sum_{j=1}^p \lambda_j\left|\beta_j\right|\right\}
\end{equation}
for $\bta_0$. 
If $\hat\tht$ in $(\ref{Lasso})$ were replaced by
the true parameter $\tht_0$, the objective function would reduce to
the standard $\ell_1$-penalization of the Lasso.
Note that $\hat\bta_L$ simultaneously performs estimation and variable selection. %

This paper aims to provide confidence regions based on
$\hat\bta_L$ for the fixed effects $\bta_0$, which hold their
coverage probability uniformly over the space of coefficient and
covariance parameters. Formally, we find a set $M \subset
\mathbb{R}^p$ such that $\inf_{\bta_0, \tht_0}\Prob_{\bta_0, \tht_0}\{
\sqrt{n}(\textstyle\hat\bta_L-\bta_0) \in M \} \approx 1-\alpha$
for some nominal level $\alpha\in(0,1)$, which is attained uniformly
over $\bta_0\in\mathbb{R}^p$ and $\tht_0\in\ThetaC$. %
We borrow and adapt the notation of \citep{Ewald2018}. The key idea of
that article is the following: Instead of treating $\hat\bta_L$
with its intractable distribution directly, one considers limiting
versions of the objective function in $(\ref{Lasso})$, in which each
component of $\bta_0$ tends to infinity in absolute value. This
process gives $2^p$ minimizers. One can choose $M$ such that it
contains each of these minimizers with a probability of at least $1 -
\alpha$. It can be shown that this procedure does yield exact
and uniformly valid confidence sets for ordinary linear models. For
the model in (\ref{GenLM}) and optimization problem in (\ref{Lasso}),
this gives the $2^p$ minimizers
\begin{equation}\label{objective_function_limiting}
\hat\uu_\dd =
	\argmin_{\uu\in\mathbb{R}^p} \;
		 \uu^\top\hat\C\uu
		- 2\uu^\top \hat\w
		+ 2\uu^\top \LAM \dd
	= \hat\C^{-1}\Big(\hat\w - \LAM\dd\Big),
\end{equation}
where $\hat\C = n^{-1}\X^\top\V(\hat\tht)^{-1}\X$, $\hat\w =
n^{-1/2}\X^\top\V(\hat\tht)^{-1}(\y - \X\bta_0)$, $\LAM =
n^{-1/2}\text{diag}(\lambda_1, \dots, \lambda_p)$ and $\dd \in
\{-1,1\}^p$. The latter plays the role of representing the signs of the
coefficients. 
The contribution of the current paper is to prove that a similar idea
can be used for LMMs, for which the main challenge is to show that the
uniform coverage carries over to the covariance parameters and that
the estimation of these parameters does not interfere with the
required uniformity over the coefficient parameters.

Subsequently, we will denote by $\uu_\dd$, $\C$ and $\w$ the
counterparts of $\hat\uu_\dd $, $\hat\C$ and $\hat\w$,
with $\hat\tht$ replaced by $\tht_0$. As the distribution of
$\hat\uu_\dd$ is not analytically available, the proofs use that
$\uu_\dd\sim \mathcal{N}_p(-\C^{-1}\LAM\dd,\C^{-1})$. We only consider
confidence sets of ellipsoidal shape, that is, $M$ will be set to
$E(\hat\C,t) = \{\zz\in\mathbb{R}^p \;|\; \zz^\top\hat\C\zz \leq
t  \}$, $t>0$. Their specific form is used in Lemma~\ref{uniflimpiv}
to establish the minimal coverage probability over $\tht_0$. %

The following regularity conditions on the model \eqref{GenLM} are imposed.
\begin{enumerate}[(A)]

\item For a constant $c \in (1,\infty)$, $\tht_0\in\ThetaC =
\left\{\tht\in\mathbb{R}^r_{>0}\;|\; \max(\tht)/\min(\tht) \leq
c\right\}$. \label{c_theta}

\item $\V(\tht_0) = \sum_{k=1}^r \theta_{0,k} \HH_k$ with 
$\HH_k$, $k\in\{1,\dots, r-1\}$, positive semi-definite and $\HH_r$ positive
definite. 
Further, $m = \min_{k\in\{ 1, \dots, r\}}\rk(\HH_k)\rightarrow\infty$.
\label{c_linear}

\item $\rk(\X) = p<n$ and for all $\tht_0\in\Theta$: $\C = n^{-1}\X^\top\V(\tht_0)^{-1}\X\rightarrow
\tilde\C$ pointwise for finite and positive definite $\tilde
\C$ as $n \to \infty$.
\label{c_x}

\item \label{c_h} For constants
$0<\underline{\omega}\leq\overline{\omega}<\infty$ it holds that 
$\underline\omega\leq\tr\{ (\sum_{l=1}^r
\HH_l)^{-1}\HH_k\}/\rk(\HH_k)\leq\overline\omega$ 
and
$\sum_{j=1}^p \eta_j\{ (\sum_{l=1}^r\HH_l)^{-1}\HH_k\} = O(1)$, 
$k \in\{1,\ldots,r\}$.
\item \label{c_pd} For $\PP(\tht, a) =
\V(\tht)^{-1}-a\V(\tht)^{-1}\X\big\{\X^\top\V(\tht)^{-1}\X\}^{-1}\X^\top\V(\tht)^{-1} \in \mathbb{R}^{n \times n}$, $a\geq1$ and $\KK \in \mathbb{R}^{r \times r}$ with entries $(\KK)_{ij} = \tr\{\PP(\mathbf{1}_r, c)\HH_i\PP(\mathbf{1}_r, c)\HH_j\}/\sqrt{\rk(\HH_i)\rk(\HH_j)}$ it holds that $\eta_r(\KK)\geq b$, for $b>0$ constant.
\end{enumerate}

First, (\ref{c_theta}) imposes restrictions on $\tht_0$. It
encompasses the conditions of positive covariance parameters and
non-degenerativity as introduced by \citep{Jiang1996}. The condition
$\max(\tht)/\min(\tht) \leq c$ allows for all components of $\tht_0$
to be either arbitrarily small or large, but not both.
Condition~(\ref{c_linear})  ensures that the covariance matrix is not
singular and that $\tht_0$ can be estimated consistently. Next,
(\ref{c_x}) corresponds to \cite[Eq. (3.47)]{Demidenko2004} and ensures that $\bta_0$ can be estimated consistently, 
which is the key advantage in the low-dimensional setting, that is for
$p<n$. An extension to the case $n\le p$ is discussed in
Section~\ref{sec:disc}. Furthermore, (\ref{c_h}) ensures that the
normalizing sequence in Theorem~\ref{unifconv} below bounds the REML
estimator for $\tht_0$ uniformly. Finally, (\ref{c_pd}) relates to the
estimator of the covariance parameters $\tht_0$. It encompasses the
condition that $\HH_k\in\mathbb{R}^{n\times n}$, $k\in\{1,\dots, r\}$, are
linearly independent, which corresponds to the condition of
identifiability of covariance parameters from \citep{Jiang1996}. Since
all involved matrices are known, the last two conditions are typically
easy to verify, as the next two examples show. \\%

\begin{exa}
Consider a linear mixed model (\ref{LMM}) with $\X_i=\Z_i =
\mathbf{1}_{n_i}$, $\PSI(\tht_0) = \sigma_v^2$ and $\OMEGA_i(\tht_0) =
\sigma_e^2\mathbf{I}_{n_i}$, so that
$\V(\tht_0)=\sigma_v^2\HH_1+\sigma_e^2\HH_2$, with
$\HH_1=\text{blockdiag}(
\mathbf{1}_{n_1}\mathbf{1}_{n_1}^\top,\ldots,\mathbf{1}_{n_m}\mathbf{1}_{n_m}^\top)$, $m>1$, $\HH_2=\mathbf{I}_n$ and $\tht_0=(\sigma^2_v,\sigma^2_e)^\top$. 
Here $m=\min_{k\in\{1,\ldots,r\}}\mbox{rk}(\HH_k)$ translates to the number of clusters in the LMM. 
It is well known that the ratio of the entries in $\tht_0$ is
influential for its estimation, see \citep{Kramlinger2018}. In
 this LMM the intraclass correlation ratio 
\begin{equation*}
\gamma_i = \frac{\sigma_v^2}{\sigma_v^2+\sigma_e^2/n_i}
\end{equation*}
measures the class variability relative to the total variability. If
$\gamma_i \approx 0$, $\sigma_v^2$ cannot be estimated reliably.
Condition (\ref{c_theta}) translates into a positive lower bound for
$\gamma_i$. Further, $\HH_1\geq0$ and $\HH_2>0$, which corresponds to
condition (\ref{c_linear}). Condition (\ref{c_x}) is trivially
satisfied. Next, (\ref{c_h}) and (\ref{c_pd}) can be verified
directly, since all quantities involved are known. For, e.g., $c=1$,
$n_1 = \dots = n_m = 2$, this gives $\underline\omega = 1 / 3$ and
$\overline\omega = 2 / 3$ and
\begin{align*}
\KK = \frac{1}{9}\begin{pmatrix} {4}\frac{m-1}{m}& {\sqrt{2}}\frac{m-1}{m}\\
 {\sqrt{2}}\frac{m-1}{m}& \frac{10m-1}{2m}\end{pmatrix},
\end{align*}
for which $b = 1/5$. Note that condition (\ref{c_h}) includes cases in
which $n_i$ is bounded as well as cases where $n_i$ is unbounded, $i \in\{1,\ldots,m\}$. This has implications at which rate $\tht_0$ and in turn
$\bta_0$ are estimated. For instance, the rate in which $\sigma_v^2$
is estimated via REML is governed by $m$, rather than the total number
of observations $n = \sum_{i=1}^m n_i$. Clearly, if all $n_i$, $i \in\{1,\ldots,m\}$ are bounded, $m = O(n)$. Since (\ref{c_h}) holds for both
asymptotic scenarios and for balanced ($n_1 = \dots = n_m$) as well as
unbalanced ($n_i \neq n_j$) LMMs, the condition is sufficiently
general for most applications.
\end{exa}
\begin{exa} 
Equation (\ref{GenLM}) allows for more general covariance
structures than  (\ref{LMM}) in which the observations are not
necessarily clustered as independent blocks. Assume a model with two
random effects $\vv\in\mathbb{R}^m$ and
$\tilde\vv\in\mathbb{R}^{\tilde{m}}$
\begin{equation*}
\y = \X\bta_0 + \Z \vv + \tilde\Z\tilde\vv +  \veeps, 
\end{equation*}
where $\Z = \text{blockdiag}(\mathbf{1}_{n_i})_{i=1}^m$, $\tilde\Z
= \text{blockdiag}(\mathbf{1}_{\tilde{n}_j})_{j=1}^{\tilde{m}}$. Condition (\ref{c_linear}) is fulfilled if $\HH_1 = \Z\Z^\top$ and $\HH_2 = \tilde\Z\tilde\Z^\top$. However, if $m = \tilde{m}$ and $n_i = \tilde{n}_i$ for all $i \in\{1,\ldots,m\}$, (\ref{c_pd}) is not satisfied. A variation of such a model is discussed in Section \ref{sec:app}.
\end{exa}
 


\section{Uniform Cram\'er consistency of the covariance parameters}
\label{sec:reml} %
The key idea in deriving uniformly valid confidence sets in LMMs is
that the estimation of $\bta_0$ and $\tht_0$ can be separated. The
virtue of restricted maximum likelihood (REML) lies in estimating the
covariance parameters, while accounting for the loss of degrees of
freedom from estimation of $\bta_0$ \citep{Searle1992}. 
The REML estimator for $\tht_0$ is a maximizer of
\begin{equation}
\begin{aligned}\label{REML}
\ell_R(\tht) = -\frac{1}{2}\ln|\V(\tht)|-\frac{1}{2}\ln|\X^\top\V(\tht)^{-1}\X|-\frac{1}{2}\y^\top\PP(\tht)\y,
\end{aligned}
\end{equation}
where $\PP(\tht) = \PP(\tht,1)$, as defined in (\ref{c_pd})
\citep{LaMotte07}. By construction, $\ell_R$ is constant with respect
to $\bta_0$ as $\PP(\tht)\X\bta_0 = \boldsymbol{0}_{n}$. 
%

As interest lies in uniformly valid inference, uniform consistency of
the REML estimator for $\tht_0$ is required. Consistency of general
maximum likelihood estimators (MLE) was first shown by
\citep{Wald1949}, and uniform MLE consistency by \citep{Moran1970}. Both
required the parameter space to be compact, as well as independently
drawn observations. %
A consistency (but not uniform consistency) result that relaxes those
assumptions was first given by \citep{Weiss1971, Weiss1973}.
\citep{Miller1977} extended these results for MLEs
in LMMs, while \citep{Das1979} and \citep{Cressie1993} did so for REML
estimators. \citep{Jiang1996} considered the REML estimator under
non-normal LMMs and for unbounded $p$. However, to the best of our
knowledge, uniform consistency of REML estimators in LMMs has not been
treated.

%
In the following theorem, we prove the {Cram\'er}-type uniform consistency over $\ThetaC$ for REML estimators in LMMs. That is, the uniform consistency is shown for a sequence of local maximizers of 
(\ref{REML}), but this sequence does not have to be unique.
In fact, for unbalanced panels in
LMMs, that is if $n_i \neq n_j$, for $i \neq j$ in (\ref{LMM}), it cannot
be guaranteed that a maximizer of (\ref{REML}) is unique
\citep{Jiang1997}. %

\begin{thm} \label{unifconv}
Let model (\ref{GenLM}) and (\ref{c_theta}) - (\ref{c_pd}) hold and let
$\nu_k(\tht_0) = \rk(\HH_k)^{-1/2}\tr\{\V(\tht_0)^{-1}\HH_k \}$ for
$k\in\{1,\dots, r\}$ and $m = \min_{k\in\{1,\dots, r\}}\{\rk(\HH_k)\}\rightarrow\infty$.
Then, there exists a sequence $\hat\tht$ of local maximizers of (\ref{REML}), such that
\begin{equation*}
\nu_k(\tht_0) \left|\hat{\theta}_k-\theta_{0,k}\right| = O_P(1)
\end{equation*}
uniformly over $\tht_0\in\ThetaC$.
\end{thm}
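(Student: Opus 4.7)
\textbf{Proof plan for Theorem \ref{unifconv}.} The plan is a uniform version of the classical Cram\'er argument: I will exhibit an ellipsoidal neighborhood of $\tht_0$, shrinking at the rate dictated by $\nu_k(\tht_0)$, on whose boundary the restricted log-likelihood $\ell_R$ is strictly smaller than $\ell_R(\tht_0)$ with probability arbitrarily close to one, uniformly in $\tht_0\in\ThetaC$. Since $\ell_R$ is smooth in $\tht$, this forces the existence of a local maximizer in the interior. Concretely, let $\N(\tht_0)=\mathrm{diag}\{\nu_1(\tht_0),\ldots,\nu_r(\tht_0)\}$ and reparametrize $\xxi=\N(\tht_0)(\tht-\tht_0)$; the claim will follow once I show that for every $\alpha>0$ there is a $\delta>0$ with
\begin{equation*}
\liminf_{m\to\infty}\inf_{\tht_0\in\ThetaC}\Prob_{\bta_0,\tht_0}\bigl\{\sup_{\|\xxi\|=\delta}\ell_R(\tht_0+\N(\tht_0)^{-1}\xxi)<\ell_R(\tht_0)\bigr\}\geq 1-\alpha.
\end{equation*}

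The first step is to compute the REML score and the observed information from (\ref{REML}): using $\PP(\tht)\V(\tht)\PP(\tht)=\PP(\tht)$ and $\PP(\tht)\X=\mathbf{0}$, one obtains
\begin{equation*}
U_k(\tht)=\frac{\partial \ell_R}{\partial \theta_k}=-\frac{1}{2}\tr\{\PP(\tht)\HH_k\}+\frac{1}{2}\y^\top\PP(\tht)\HH_k\PP(\tht)\y,
\end{equation*}
so that under the model $\y=\X\bta_0+\V(\tht_0)^{1/2}\zz$ with $\zz\sim\mathcal{N}_n(\mathbf{0},\I_n)$ one has $E_{\tht_0}[U_k(\tht_0)]=0$ and the variance of $\nu_k(\tht_0)^{-1}U_k(\tht_0)$ equals $\frac{1}{2}\tr\{\PP(\tht_0)\HH_k\PP(\tht_0)\HH_k\}/\rk(\HH_k)$. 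Using condition (\ref{c_linear}), (\ref{c_h}) and the scaling trick $\V(\tht_0)^{-1}=(\sum_l\theta_{0,l}\HH_l)^{-1}$, condition (\ref{c_theta}) allows me to bound this variance uniformly in $\tht_0\in\ThetaC$, so that $\N(\tht_0)^{-1}\nabla\ell_R(\tht_0)=O_P(1)$ uniformly.

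Next I analyze the observed information. Differentiating once more yields second derivatives of the form
\begin{equation*}
\frac{\partial^2\ell_R}{\partial\theta_k\partial\theta_l}\Big|_{\tht_0}=\tfrac{1}{2}\tr\{\PP(\tht_0)\HH_k\PP(\tht_0)\HH_l\}-\y^\top\PP(\tht_0)\HH_k\PP(\tht_0)\HH_l\PP(\tht_0)\y
\end{equation*}
up to terms that vanish in expectation. Normalizing by $\nu_k(\tht_0)\nu_l(\tht_0)$ gives exactly the scaled trace appearing in the matrix $\KK$ of condition (\ref{c_pd}) after applying the rescaling $\HH_l\mapsto\theta_{0,l}\HH_l$ permitted by (\ref{c_theta}). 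Hence the normalized Fisher information is bounded below by $b/2$ and above by a constant uniformly in $\tht_0$, while a Hanson--Wright-type concentration inequality for quadratic forms in $\zz$ shows that the stochastic remainder of $\N(\tht_0)^{-1}\nabla^2\ell_R(\tht_0)\N(\tht_0)^{-1}$ is uniformly $o_P(1)$. A further Taylor expansion combined with the Lipschitz behavior of $\tht\mapsto\PP(\tht)$ under (\ref{c_theta}) shows that the information remains uniformly negative definite throughout the ball $\|\xxi\|\leq\delta$, once $\delta$ is fixed.

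Combining these pieces in a second-order Taylor expansion gives, on $\|\xxi\|=\delta$,
\begin{equation*}
\ell_R(\tht_0+\N(\tht_0)^{-1}\xxi)-\ell_R(\tht_0)\leq \|\xxi\|\cdot\|\N(\tht_0)^{-1}\nabla\ell_R(\tht_0)\|-\tfrac{b}{8}\|\xxi\|^2+o_P(\delta^2),
\end{equation*}
and choosing $\delta$ large enough (depending only on $\alpha$ and $b$) produces the required strict inequality uniformly in $\tht_0\in\ThetaC$. The resulting local maximizer $\hat\tht$ then satisfies $\|\N(\tht_0)(\hat\tht-\tht_0)\|\leq\delta$, which is exactly the claim. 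The main technical obstacle is the uniformity step: controlling the remainder of the Taylor expansion and the concentration of the quadratic forms \emph{simultaneously} over all $\tht_0\in\ThetaC$. This is where conditions (\ref{c_theta}) and (\ref{c_h}) do the heavy lifting, since they allow trace quantities depending on $\tht_0$ to be compared to their $\tht_0$-free analogues (essentially the matrix $\KK$ of condition (\ref{c_pd})) up to multiplicative constants that depend only on $c$, $\underline\omega$, $\overline\omega$ and $b$.
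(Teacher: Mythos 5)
Your plan is, in substance, the paper's proof: the same four ingredients appear in the same roles --- unbiasedness of the REML score, a uniform $O_P(1)$ bound on the $\nu_k$-normalized score via a variance computation, a lower bound on the smallest eigenvalue of the normalized expected information obtained from condition (\ref{c_pd}) after rescaling by (\ref{c_theta}), and uniform control of the normalized Hessian on a neighborhood, all fed into a local Taylor argument. The only organizational difference is that you run the textbook Cram\'er argument on a sphere of fixed large radius $\delta$ around $\tht_0$ in the $\nu$-scaled metric, whereas the paper follows \citep{Weiss1971} and localizes the maximizer in a shrinking neighborhood of the one-step Newton point $\tht'$ defined by $\nu(\tht_0)\circ(\tht'-\tht_0)=\J(\tht_0)^{-1}\mathbf{s}(\tht_0)$. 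Both yield the stated $O_P(1)$ rate; your version needs Hessian control only on a fixed-radius ball, the paper's on the larger set $\Theta_q$ with $q=m^{1/(6+\delta)}\to\infty$.

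Two details deserve attention. First, your variance identity is mis-normalized: since $\nu_k(\tht_0)^2=\tr\{\V(\tht_0)^{-1}\HH_k\}^2/\rk(\HH_k)$, one has $\mbox{Var}_{\tht_0}\{\nu_k(\tht_0)^{-1}U_k(\tht_0)\}=\tfrac12\tr\{(\PP(\tht_0)\HH_k)^2\}\,\rk(\HH_k)/\tr\{\V(\tht_0)^{-1}\HH_k\}^2$, not $\tfrac12\tr\{(\PP\HH_k)^2\}/\rk(\HH_k)$; the two coincide only for $\nu_k=\sqrt{\rk(\HH_k)}$, and with the wrong normalization the uniform bound over $\ThetaC$ actually fails when components of $\tht_0$ approach zero --- the $\theta_{0,k}$-dependence cancels only with the theorem's $\nu_k$. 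Second, the step you defer to ``conditions (\ref{c_theta}) and (\ref{c_h}) doing the heavy lifting'' is exactly where the paper needs a dedicated auxiliary result (its Lemma~\ref{Qlemma}): a H\"older-type trace inequality giving $\tr\{\prod_{k\in K}\PP(\tht)\HH_k\}\le (c/\underline\omega)^{|K|-1}m^{1-|K|/2}\prod_{k\in K}\nu_k(\tht)$, proved by exploiting that the relevant projection has eigenvalues in $\{0,1\}$ together with $\tr(\A_k^{|K|})\le(c/\underline\omega)^{|K|-1}\tr(\A_k)^{|K|}/\rk(\HH_k)^{|K|-1}$. The case $|K|=2$ is what makes your score variance uniformly bounded, and $|K|=3$ is what controls the third-order terms hiding in your ``Lipschitz behavior of $\PP(\tht)$'' step; neither follows from (\ref{c_theta})--(\ref{c_pd}) by inspection, so this lemma (or an equivalent) must be stated and proved for the plan to close.
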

Throughout the paper, the notation $Z_m = O_P(a)$ for a sequence of
random variables $Z_m$, $m \in \mathbb{N}$, is understood to hold
{uniformly}. More concretely, it is defined that for all
$\varepsilon>0$, there exist constants $M_\varepsilon, N_\varepsilon$
such that for all $m>N_\varepsilon$:
$\inf_{\tht_0\in\Theta}\Prob_{\tht_0}(Z_m\leq M_\varepsilon) \geq 1 -
\varepsilon$. %
Regarding the normalizing sequence $\nu_k(\tht_0)$, note that
$-\mbox{E}_{\tht_0}\{\partial^2\ell_R(\tht)/\partial\theta_k^2|_{\tht_0}\} = O\{\nu_k(\tht_0)^2\}$, see (\ref{trsplitREML}) for details. 
The proof of Theorem~\ref{unifconv} follows the lines of
\citep{Moran1970, Weiss1971}. We extend their reasoning to
account for a uniformity argument over $\Theta$ by the construction of the
normalizing sequence $\nu_k(\tht_0)$. If $\tht_0$ were fixed, the
normalizing sequence would reduce to $\sqrt{\rk(\HH_k)}$, which
corresponds to the parametric rate in classical linear models
\citep{Jiang1996}. However, the specific choice of $\nu_k(\tht_0)$
allows to prove uniform consistency with the help of the auxiliary
result given in Lemma~\ref{Qlemma} in Section~\ref{sec:proofs}.

\section{Uniform  confidence regions for the coefficient parameters}
\label{sec:main_results}
Now we can derive a uniform confidence region for the parameter $\boldsymbol{\beta}_0$, 
based on the Lasso estimator. 
For linear models, it is known that the minimal (infimal) coverage
for Lasso-based confidence sets over the space of coefficient
parameters occurs when the parameter values are large in absolute
value, depending only on the sign of the entries of $\bta_0$. This
finding carries over to infimal coverage over the space of coefficient
and covariance parameters, as the latter, when estimated with REML, do
not depend on the fixed effects.
\begin{prop} \label{approxlimcase}
Let model (\ref{GenLM}) and  (\ref{c_theta}) - (\ref{c_pd}) hold. Then
there exists a sequence $\hat\tht$ of local maximizers of
(\ref{REML}) such that
\begin{equation*}
\inf_{\substack{\bta_0\in\mathbb{R}^p\\\tht_0\in\ThetaC}}\Prob_{\bta_0,\tht_0}\left\{ \sqrt{n}\left( \hat\bta_L - \bta_0 \right) \in E\left(\hat\C, t \right) \right\}
	= \inf_{\substack{\dd\in\{-1,1\}^p\\\tht_0\in\ThetaC}}\Prob_{\tht_0}\left\{ \hat\uu_\dd \in E\left(\hat\C,t\right) \right\}.
\end{equation*}
\end{prop}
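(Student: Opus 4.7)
The plan is to reduce the statement to a fixed-$\tht_0$ identity and then establish the two inequalities $\le$ and $\ge$ separately. The key structural observation is that $\hat\tht$ is invariant in $\bta_0$: because $\PP(\tht)\X=\boldsymbol{0}$, the REML log-likelihood (\ref{REML}) depends on $\y$ only through $\y^\top\PP(\tht)\y=\eeps^\top\PP(\tht)\eeps$, so $\hat\tht$, $\hat\C$, $\hat\w=n^{-1/2}\X^\top\V(\hat\tht)^{-1}\eeps$ and each $\hat\uu_\dd$ are functions of $\eeps$ alone. Their joint distribution under $\Prob_{\bta_0,\tht_0}$ depends only on $\tht_0$, which justifies the $\Prob_{\tht_0}$ notation on the right-hand side. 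Factoring both double infima as $\inf_{\tht_0}\inf_{\bta_0}$ and $\inf_{\tht_0}\min_{\dd}$ respectively, it suffices to prove, for every fixed $\tht_0\in\ThetaC$, the identity $\inf_{\bta_0}\Prob_{\bta_0,\tht_0}\{\sqrt{n}(\hat\bta_L-\bta_0)\in E(\hat\C,t)\}=\min_{\dd}\Prob_{\tht_0}\{\hat\uu_\dd\in E(\hat\C,t)\}$.

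For the ``$\le$'' direction I would invoke the Karush--Kuhn--Tucker conditions of (\ref{Lasso}): writing $\uu=\sqrt{n}(\hat\bta_L-\bta_0)$ one obtains $\hat\C\uu=\hat\w-\LAM\hat\dd$ with $\hat\dd\in[-1,1]^p$, where $\hat\dd_j=\text{sgn}(\beta_{0,j}+u_j/\sqrt{n})$ whenever that argument is nonzero. Fixing $\dd\in\{-1,1\}^p$ and sending the components of $\bta_0$ to infinity with $\text{sgn}(\beta_{0,j})=d_j$, the a.s.\ finiteness of $(\hat\uu_\dd)_j/\sqrt{n}$ forces $\hat\dd=\dd$ eventually, so $\uu=\hat\uu_\dd$ on a set of probability one. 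Dominated convergence then yields $\Prob_{\bta_0,\tht_0}\{\uu\in E(\hat\C,t)\}\to\Prob_{\tht_0}\{\hat\uu_\dd\in E(\hat\C,t)\}$, and taking $\inf$ over $\dd$ and $\tht_0$ produces the ``$\le$'' inequality.

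The harder ``$\ge$'' direction is where I would adapt the pathwise analysis of \citep{Ewald2018}. The essential new ingredient enabled by the LMM/REML setting is the $\eeps\mapsto-\eeps$ symmetry of the REML likelihood: since (\ref{REML}) depends on $\eeps$ only through $\eeps^\top\PP(\tht)\eeps$, one has $\hat\tht(-\eeps)=\hat\tht(\eeps)$, hence $\hat\C(-\eeps)=\hat\C(\eeps)$ and $\hat\w(-\eeps)=-\hat\w(\eeps)$, so conditional on $\hat\tht$ the vector $\hat\w$ is symmetrically distributed about the origin. Combining this symmetry with the origin-symmetric, ellipsoidal shape of $E(\hat\C,t)$ and with the KKT representation $\uu=\hat\C^{-1}(\hat\w-\LAM\hat\dd)$, $\hat\dd\in[-1,1]^p$, the pathwise Ewald--Schneider bound can be run even with random $\hat\C$ to give $\Prob_{\bta_0,\tht_0}\{\uu\in E(\hat\C,t)\}\ge\min_{\dd}\Prob_{\tht_0}\{\hat\uu_\dd\in E(\hat\C,t)\}$ pointwise in $\bta_0$. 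The main obstacle I anticipate is to reconcile the conditional-on-$\hat\tht$ version of this bound with the unconditional $\min_{\dd}$ on the right-hand side (since in general $\mathbb{E}[\min_{\dd}(\cdot)]\le\min_{\dd}\mathbb{E}[\cdot]$); I expect this reconciliation, which leans crucially on the ellipsoidal geometry, to be the content of the auxiliary Lemma~\ref{uniflimpiv} that the excerpt credits with ``establishing the minimal coverage probability over $\tht_0$''.
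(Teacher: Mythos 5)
Your reduction is exactly the one the paper relies on: the observation that $\PP(\tht)\X=\boldsymbol{0}_n$ makes $\ell_R$ a function of $\eeps$ alone, so that $\hat\tht$, $\hat\C$, $\hat\w$ and the $\hat\uu_\dd$ have a joint law depending only on $\tht_0$, is precisely the paper's stated justification, and your ``$\le$'' half (sending $\bta_0$ to infinity along a sign pattern $\dd$, using the KKT conditions and dominated convergence) matches the limiting argument behind \cite[Theorem 1]{Ewald2018}. The difference is that the paper simply cites that theorem at this point, whereas you attempt to reprove it, and your ``$\ge$'' half is where the proposal has a genuine gap. The assertion that ``the pathwise Ewald--Schneider bound can be run'' is not a proof: the pathwise KKT structure only places $\sqrt{n}(\hat\bta_L-\bta_0)$ in the parallelotope $\{\hat\C^{-1}(\hat\w-\LAM\dd):\dd\in[-1,1]^p\}$, so a realization-by-realization comparison yields at best the lower bound $\Prob_{\tht_0}\bigl(\bigcap_{\dd}\{\hat\uu_\dd\in E(\hat\C,t)\}\bigr)$, which is in general strictly smaller than $\min_{\dd}\Prob_{\tht_0}\{\hat\uu_\dd\in E(\hat\C,t)\}$. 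Passing from the intersection to the minimum is the substantive content of \cite[Theorem 1]{Ewald2018}; it uses the distribution of $\hat\w$ together with the geometry of the set (the identity fails for arbitrary Borel sets, since $\sqrt{n}(\hat\bta_L-\bta_0)$ has an atom at $-\sqrt{n}\bta_0$ while each $\hat\uu_\dd$ is continuously distributed), and neither the $\eeps\mapsto-\eeps$ symmetry you introduce nor the convexity of the ellipsoid is shown to deliver it.

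Two further points. First, deferring the unresolved step to Lemma~\ref{uniflimpiv} misreads the paper's architecture: that lemma is used in Theorem~\ref{mainres}, not here, and its role is to replace $\hat\tht$ by $\tht_0$ inside the coverage probability at a cost of $O(m^{-1/2})$ via Lemma~\ref{pullerroutlemma}; it does not address the interchange of $\min_{\dd}$ with the averaging over $\hat\tht$ that you flag. Second, that interchange is nonetheless a legitimate concern if one wishes to invoke \cite[Theorem 1]{Ewald2018} conditionally on $\hat\tht$, since the conditional law of $\eeps$ given $\hat\tht$ is no longer Gaussian and $\mbox{E}[\min_{\dd}(\cdot)]\le\min_{\dd}\mbox{E}[\cdot]$ points the wrong way. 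You have therefore correctly located the delicate point that the paper's one-line proof glosses over, but to close the argument you would need to verify that the proof of that theorem carries over with $(\C,\w)$ replaced by the coupled random pair $(\hat\C,\hat\w)$ --- for instance by exploiting exactly the conditional symmetry and continuity properties you establish --- rather than assert it.
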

The proof follows from \cite[Theorem 1]{Ewald2018} as $\hat\tht$
is independent from $\bta_0$ by construction. Instead of evaluating
the infimum for $\bta_0$ directly, the result states that a much
simpler minimization over a discrete set is sufficient to evaluate the
probability. %
In the context of LMMs, the key consequence of
Proposition~\ref{approxlimcase} is that, due to the orthogonality of
$\hat\tht$ and $\bta_0$, the uniform coverage probability can be
separately evaluated for $\bta_0$ and $\tht_0$. For the latter,
Theorem~\ref{unifconv} allows to replace $\hat\tht$ by $\tht_0$ at
a cost of an error term with a tractable dependency on $\tht_0$.
Eventually, this leads to the following result.
\begin{thm}  \label{mainres}
Let model (\ref{GenLM}) and (\ref{c_theta})-(\ref{c_pd}) hold and $m = \min_{k\in\{1,\dots, r\}}\{\rk(\HH_k)\}\rightarrow\infty$.
Then, there exists a sequence $\hat\tht$ of local maximizers of (\ref{REML}) such that 
\begin{equation*}
\hat\tau =   \max_{\dd\in\{-1,1\}^p} \chi^2_{p, 1-\alpha} \left(  \left\| \hat\C^{-1/2}\LAM\dd  \right\|^2    \right)
\end{equation*}
the quantile of the non-central $\chi^2_p$-distribution, and it holds for all $\tht_0\in\ThetaC$ that
\begin{equation*}
\inf_{\substack{\bta_0\in\mathbb{R}^p\\\tht_0\in\ThetaC}}\Prob_{\bta_0,\tht_0}\left\{ \sqrt{n}\left( \hat\bta_L - \bta_0 \right) \in E\left(\hat\C, \hat\tau  \right) \right\}
	= 1 - \alpha + O\left(\frac{1}{\sqrt{m}}\right).
\end{equation*}
\end{thm}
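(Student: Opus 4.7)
The plan is to combine Proposition~\ref{approxlimcase} with Theorem~\ref{unifconv} to control the plug-in error incurred by replacing $\tht_0$ by the REML estimator $\hat\tht$. First, Proposition~\ref{approxlimcase} reduces the left-hand side to
\begin{equation*}
\inf_{\substack{\dd\in\{-1,1\}^p\\\tht_0\in\ThetaC}}\Prob_{\tht_0}\left\{\hat\uu_\dd\in E(\hat\C,\hat\tau)\right\}.
\end{equation*}
Consider the oracle version in which $(\hat\C,\hat\uu_\dd,\hat\tau)$ are replaced by $(\C,\uu_\dd,\tau)$ with the deterministic quantity $\tau=\max_{\dd}\chi^2_{p,1-\alpha}(\|\C^{-1/2}\LAM\dd\|^2)$. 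Since $\uu_\dd\sim\mathcal{N}_p(-\C^{-1}\LAM\dd,\C^{-1})$, the quadratic form $\uu_\dd^\top\C\uu_\dd$ has a noncentral $\chi^2_p$ law with noncentrality $\|\C^{-1/2}\LAM\dd\|^2$. Therefore $\Prob_{\tht_0}\{\uu_\dd\in E(\C,\tau)\}\geq 1-\alpha$ for every $\dd$, with equality attained at the sign vector maximizing the noncentrality. Hence the oracle infimum equals exactly $1-\alpha$, and the task reduces to showing that the plug-in perturbation contributes at most $O(m^{-1/2})$ uniformly over $\ThetaC$.

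For the second step, Theorem~\ref{unifconv} combined with condition (\ref{c_h}) gives $\nu_k(\tht_0)\gtrsim\sqrt{\rk(\HH_k)}\geq\sqrt{m}$ uniformly in $\tht_0\in\ThetaC$, so that $\|\hat\tht-\tht_0\|=O_P(m^{-1/2})$ uniformly. Because $\V(\tht)=\sum_k\theta_k\HH_k$ is linear in $\tht$ and conditions (\ref{c_theta})--(\ref{c_x}) prevent $\V(\tht)$ from becoming ill-conditioned on $\ThetaC$ and keep $\C$ bounded above and away from zero, a first-order expansion of $\tht\mapsto n^{-1}\X^\top\V(\tht)^{-1}\X$ and of $\tht\mapsto n^{-1/2}\X^\top\V(\tht)^{-1}(\y-\X\bta_0)$ yields $\hat\C-\C=O_P(m^{-1/2})$ and $\hat\w-\w=O_P(m^{-1/2})$, and hence $\hat\uu_\dd-\uu_\dd=O_P(m^{-1/2})$, all uniformly in $(\dd,\tht_0)$. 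The same perturbation rate transfers to the data-driven cutoff through continuous differentiability of the noncentral $\chi^2_p$ quantile in its noncentrality parameter, giving $\hat\tau-\tau=O_P(m^{-1/2})$.

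The final step is to convert these perturbation bounds into a perturbation of the coverage probability. Writing this probability as a noncentral $\chi^2_p$ CDF evaluated at $\hat\tau$ with noncentrality depending on $\hat\C$ (this is the content of the auxiliary pivotal lemma alluded to before the conditions (A)--(E)) and invoking the joint Lipschitz continuity of this CDF in its argument and noncentrality, the combined perturbations give
\begin{equation*}
\Prob_{\tht_0}\{\hat\uu_\dd\in E(\hat\C,\hat\tau)\}=\Prob_{\tht_0}\{\uu_\dd\in E(\C,\tau)\}+O(m^{-1/2}),
\end{equation*}
uniformly in $(\dd,\tht_0)$; taking the infimum on both sides yields the claimed rate. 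The main obstacle is ensuring that the Lipschitz constants and the implicit constants in every $O_P$ bound remain controlled uniformly as $\tht_0$ ranges over $\ThetaC$. This is precisely where condition (\ref{c_theta}), restricting the ratio $\max(\tht)/\min(\tht)$, becomes decisive: it prevents the covariance matrix $\V(\tht)$ from degenerating, keeps the density of $\uu_\dd^\top\C\uu_\dd$ near $\tau$ bounded away from zero, and therefore keeps the chain of Lipschitz arguments finite uniformly in $\tht_0$.
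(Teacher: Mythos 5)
Your overall architecture matches the paper's: reduce via Proposition~\ref{approxlimcase} to the infimum over $\dd$ and $\tht_0$, identify the oracle infimum as exactly $1-\alpha$ through the noncentral $\chi^2_p$ law of $\|\C^{1/2}\uu_\dd\|^2$ and the monotonicity in the noncentrality (Lemma~\ref{minmization_wrt_d}), and then argue that the plug-in of $\hat\tht$ costs only $O(m^{-1/2})$. However, the decisive step is asserted rather than proved, and the form in which you assert it is not correct. You write the coverage probability as ``a noncentral $\chi^2_p$ CDF evaluated at $\hat\tau$ with noncentrality depending on $\hat\C$'' and then invoke Lipschitz continuity of that CDF. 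But $\|\hat\C^{1/2}\hat\uu_\dd\|^2$ does \emph{not} have a noncentral $\chi^2_p$ distribution with estimated parameters: $\hat\w = n^{-1/2}\X^\top\V(\hat\tht)^{-1}(\y-\X\bta_0)$ involves $\hat\tht$, which is itself a function of $\y$, so conditionally on $\hat\tht$ the statistic is not the claimed quadratic form of a Gaussian. More generally, knowing $\hat W - W = O_P(\delta)$ and $\hat\tau - \tau = O_P(\delta)$ does not by itself give $\Prob(\hat W\le\hat\tau)=\Prob(W\le\tau)+O(\delta)$, because the perturbation is random and correlated with $W$; one needs a decoupling/anti-concentration argument with an explicit rate. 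This is precisely what the paper supplies: a Taylor expansion showing $\|\hat\C^{1/2}\hat\uu_\dd\|^2=\|\C^{1/2}\uu_\dd\|^2\{1+O_P(m^{-1/2})\}$, followed by Lemma~\ref{uniflimpiv} and the convolution-type Lemma~\ref{pullerroutlemma}, which pull the multiplicative $O_P(m^{-1/2})$ perturbation out of the probability at the stated rate. Without an analogue of that lemma your argument does not close.

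A secondary but relevant inaccuracy: from (\ref{nubound}) one only gets $\theta_{0,k}\,\nu_k(\tht_0)\asymp\sqrt{\rk(\HH_k)}$, so $\nu_k(\tht_0)\gtrsim\sqrt{m}$ does \emph{not} hold uniformly over $\ThetaC$ (the $\theta_{0,k}$ are unbounded there; only their ratios are controlled). Theorem~\ref{unifconv} therefore yields a \emph{relative} rate $|\hat\theta_k-\theta_{0,k}|=O_P(\theta_{0,k}/\sqrt m)$, and it is only after the matching $1/\theta$-scaling of the derivatives of $\C$ and $\w$ that the perturbations become $O_P(m^{-1/2})$. Relatedly, the perturbation of the cutoff is $\hat\tau-\tau=O_P(m^{-1/2}\xi)$ with $\xi=\|\C^{-1/2}\LAM\dd\|^2$, not $O_P(m^{-1/2})$; since the theorem places no bound on the $\lambda_j$, $\xi$ need not be $O(1)$, and the paper's Lemma~\ref{uniflimpiv} is built exactly to absorb this $\xi$-dependence by normalizing all quantities by $1+\xi$. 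Your proposal does not track this dependence, so even granting a generic Lipschitz argument the claimed uniform $O(m^{-1/2})$ rate would not follow.
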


The result differs from Proposition~\ref{approxlimcase} in two
aspects. First, the minimization over $\dd\in\{-1,1\}^p$ has been
shifted to the choice of the non-centrality parameter of the
$\chi^2$-distribution, based on Propositions 4 and 5 from
\citep{Ewald2018}. 
Second, the estimation of $\tht_0$ only contributes to an error term
of uniformly vanishing,  parametric order $m^{-1/2}$. The rate is
given by the $m$ observations that contribute to the estimation of
$\theta_{0,k'}$, $k' = \argmin_{k\in\{1,\ldots,r\}} \rk(\HH_k)$. The crucial finding is
that the error terms that occur in the estimation of both $\bta_0$ and
$\tht_0$ and which depend on $\tht_0$ eventually translate to an error
term uniformly independent of $\tht_0$. As a consequence of this
result, it follows that the Lasso-based confidence ellipsoid
\begin{equation} \label{ConfSet}
M = \left\{ \bta \in \mathbb{R}^p\text{:}\quad  n \left\| \hat\C^{1/2} \left( \hat\bta_L - \bta\right) \right\|^2 \leq \hat\tau \right\}
\end{equation}
attains nominal coverage uniformly up to an error of a parametric
rate. %
By separating the estimation of $\bta_0$ with penalization and
$\tht_0$ without it, the result allows to infer about the fixed
effects for data with a complex dependency structure. From the
orthogonality argument, it is clear that the penalization on $\bta_0$
does not interfere with inference based on $\hat\tht$, for which
existing theory can be applied. Since the REML estimates exhibit no
analytically tractable finite sample distribution, the approximation
with a $\chi^2$-quantile is only valid asymptotically. Although this
implies that the resulting testing procedure is not of nominal level
$1-\alpha$ as discussed in \citep{Leeb2017}, the simulation in
Section~\ref{sec:sims} suggests that this error term has little
influence in finite samples. Most importantly, however, is that the
error term does not depend on $\bta_0$ and $\tht_0$ and the stated
coverage probability is attained uniformly over both the space of
regression coefficients and covariance parameters. Nevertheless, the
precise coverage probability depends on the sign of the entries of
$\bta_0$. Specifically, all but two orthants exhibit an overcoverage.
This effect is visualized and discussed in Fig. \ref{quad_fig} in
Section~\ref{sec:sims}.

When compared to standard confidence sets based on estimation
procedures without variable selection, it is well-known that the
confidence set $M$ is larger \citep{Poetscher2009}. In many
applications, however, smaller models and thus selection of covariates
is preferred. In such cases, $\hat\bta_L$ and inference based upon
it offer an alternative, despite the Lasso's difficult distributional
properties.

\section{Simulations} \label{sec:sims}

In this section, the finite sample coverage probabilities of the
confidence set  (\ref{ConfSet}) are compared with several
alternatives. First, with a least-squares-based approach, which offers
uniform coverage but no selection. Second, with a least-squares-based
approach after selecting a model with AIC, resulting in type-I-error
inflation. And finally, we demonstrate that a least-squares-based
confidence set centered at $\hat\bta_L$ does not meet the nominal
level. For visualization, we use the following simulation design for
two coefficient parameters. Consider the `random intercept model', a
special case of the model (\ref{LMM}),
\begin{equation}
\begin{gathered}\label{RIM}
y_{ij} = \x_{ij}^\top\bta_0 + v_i + u_{ij},\quad i\in\{1,\ldots,m\}, \quad j\in\{1,\dots, n_i\}, \\ 
u_{ij} \overset{i.i.d.} \sim \mathcal{N}(0,\sigma_u^2), \quad v_i \overset{i.i.d.} \sim \mathcal{N}(0,\sigma_v^2).
\end{gathered}
\end{equation}
We fix the parameters to $m=20$, $n=400$, $n_i=20$, $\sigma_u=\sigma_v=4$. Note that the parameter $m$ that determines the convergence rate is quite small. 
For visualization purposes, we restrict the simulation to $p=2$ parameters.
The tuning parameters are chosen to be $\lambda_i = n^{1/2}/2$ for $i\in\{1,2\}$, which corresponds to a conservative tuning regime.
The entries $\x_{ij} = (x_{1ij}, x_{2ij})^\top$ of the matrix of covariables are independently drawn once from $\mathcal{N}(0,4)$ so that the fixed and random effects are of a comparable magnitude.
For each $\bta_0\in[-4,4]\times[-4,4]$ on an equidistant grid of $81\times81$, $2$,$000$ Monte Carlo simulations were carried out.
For each simulation, the empirical coverage probability is computed by checking if $\bta_0\in M$, for $M$ with $\alpha = 0.05$ from (\ref{ConfSet}).
Fig.~\ref{quad_fig} shows the results.

\begin{figure}[t]
	\hspace*{1.8cm}
\begin{tikzpicture}[x=1pt,y=1pt]
\definecolor{fillColor}{RGB}{255,255,255}
\path[use as bounding box,fill=fillColor,fill opacity=0.00] (0,0) rectangle (375.80,375.80);
\begin{scope}
\path[clip] ( 30.41,203.33) rectangle (171.01,343.93);
\definecolor{drawColor}{gray}{0.92}

\path[draw=drawColor,line width= 0.3pt,line join=round] ( 30.41,210.98) --
	(171.01,210.98);

\path[draw=drawColor,line width= 0.3pt,line join=round] ( 30.41,236.04) --
	(171.01,236.04);

\path[draw=drawColor,line width= 0.3pt,line join=round] ( 30.41,261.10) --
	(171.01,261.10);

\path[draw=drawColor,line width= 0.3pt,line join=round] ( 30.41,286.16) --
	(171.01,286.16);

\path[draw=drawColor,line width= 0.3pt,line join=round] ( 30.41,311.23) --
	(171.01,311.23);

\path[draw=drawColor,line width= 0.3pt,line join=round] ( 30.41,336.29) --
	(171.01,336.29);

\path[draw=drawColor,line width= 0.3pt,line join=round] ( 38.06,203.33) --
	( 38.06,343.93);

\path[draw=drawColor,line width= 0.3pt,line join=round] ( 63.12,203.33) --
	( 63.12,343.93);

\path[draw=drawColor,line width= 0.3pt,line join=round] ( 88.18,203.33) --
	( 88.18,343.93);

\path[draw=drawColor,line width= 0.3pt,line join=round] (113.24,203.33) --
	(113.24,343.93);

\path[draw=drawColor,line width= 0.3pt,line join=round] (138.30,203.33) --
	(138.30,343.93);

\path[draw=drawColor,line width= 0.3pt,line join=round] (163.37,203.33) --
	(163.37,343.93);

\path[draw=drawColor,line width= 0.6pt,line join=round] ( 30.41,223.51) --
	(171.01,223.51);

\path[draw=drawColor,line width= 0.6pt,line join=round] ( 30.41,248.57) --
	(171.01,248.57);

\path[draw=drawColor,line width= 0.6pt,line join=round] ( 30.41,273.63) --
	(171.01,273.63);

\path[draw=drawColor,line width= 0.6pt,line join=round] ( 30.41,298.69) --
	(171.01,298.69);

\path[draw=drawColor,line width= 0.6pt,line join=round] ( 30.41,323.76) --
	(171.01,323.76);

\path[draw=drawColor,line width= 0.6pt,line join=round] ( 50.59,203.33) --
	( 50.59,343.93);

\path[draw=drawColor,line width= 0.6pt,line join=round] ( 75.65,203.33) --
	( 75.65,343.93);

\path[draw=drawColor,line width= 0.6pt,line join=round] (100.71,203.33) --
	(100.71,343.93);

\path[draw=drawColor,line width= 0.6pt,line join=round] (125.77,203.33) --
	(125.77,343.93);

\path[draw=drawColor,line width= 0.6pt,line join=round] (150.84,203.33) --
	(150.84,343.93);
\node[inner sep=0pt,outer sep=0pt,anchor=south west,rotate=  0.00] at ( 36.80, 209.72) {
	\pgfimage[width=127.82pt,height=127.82pt,interpolate=false]{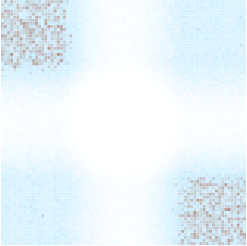}};
\end{scope}
\begin{scope}
\path[clip] ( 30.41, 40.66) rectangle (171.01,181.26);
\definecolor{drawColor}{gray}{0.92}

\path[draw=drawColor,line width= 0.3pt,line join=round] ( 30.41, 48.31) --
	(171.01, 48.31);

\path[draw=drawColor,line width= 0.3pt,line join=round] ( 30.41, 73.37) --
	(171.01, 73.37);

\path[draw=drawColor,line width= 0.3pt,line join=round] ( 30.41, 98.43) --
	(171.01, 98.43);

\path[draw=drawColor,line width= 0.3pt,line join=round] ( 30.41,123.49) --
	(171.01,123.49);

\path[draw=drawColor,line width= 0.3pt,line join=round] ( 30.41,148.56) --
	(171.01,148.56);

\path[draw=drawColor,line width= 0.3pt,line join=round] ( 30.41,173.62) --
	(171.01,173.62);

\path[draw=drawColor,line width= 0.3pt,line join=round] ( 38.06, 40.66) --
	( 38.06,181.26);

\path[draw=drawColor,line width= 0.3pt,line join=round] ( 63.12, 40.66) --
	( 63.12,181.26);

\path[draw=drawColor,line width= 0.3pt,line join=round] ( 88.18, 40.66) --
	( 88.18,181.26);

\path[draw=drawColor,line width= 0.3pt,line join=round] (113.24, 40.66) --
	(113.24,181.26);

\path[draw=drawColor,line width= 0.3pt,line join=round] (138.30, 40.66) --
	(138.30,181.26);

\path[draw=drawColor,line width= 0.3pt,line join=round] (163.37, 40.66) --
	(163.37,181.26);

\path[draw=drawColor,line width= 0.6pt,line join=round] ( 30.41, 60.84) --
	(171.01, 60.84);

\path[draw=drawColor,line width= 0.6pt,line join=round] ( 30.41, 85.90) --
	(171.01, 85.90);

\path[draw=drawColor,line width= 0.6pt,line join=round] ( 30.41,110.96) --
	(171.01,110.96);

\path[draw=drawColor,line width= 0.6pt,line join=round] ( 30.41,136.02) --
	(171.01,136.02);

\path[draw=drawColor,line width= 0.6pt,line join=round] ( 30.41,161.09) --
	(171.01,161.09);

\path[draw=drawColor,line width= 0.6pt,line join=round] ( 50.59, 40.66) --
	( 50.59,181.26);

\path[draw=drawColor,line width= 0.6pt,line join=round] ( 75.65, 40.66) --
	( 75.65,181.26);

\path[draw=drawColor,line width= 0.6pt,line join=round] (100.71, 40.66) --
	(100.71,181.26);

\path[draw=drawColor,line width= 0.6pt,line join=round] (125.77, 40.66) --
	(125.77,181.26);

\path[draw=drawColor,line width= 0.6pt,line join=round] (150.84, 40.66) --
	(150.84,181.26);
\node[inner sep=0pt,outer sep=0pt,anchor=south west,rotate=  0.00] at ( 36.80,  47.05) {
	\pgfimage[width=127.82pt,height=127.82pt,interpolate=false]{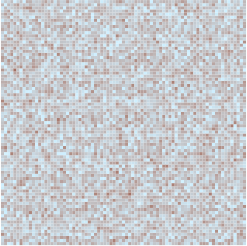}};
\end{scope}
\begin{scope}
\path[clip] (176.51,203.33) rectangle (317.11,343.93);
\definecolor{drawColor}{gray}{0.92}

\path[draw=drawColor,line width= 0.3pt,line join=round] (176.51,210.98) --
	(317.11,210.98);

\path[draw=drawColor,line width= 0.3pt,line join=round] (176.51,236.04) --
	(317.11,236.04);

\path[draw=drawColor,line width= 0.3pt,line join=round] (176.51,261.10) --
	(317.11,261.10);

\path[draw=drawColor,line width= 0.3pt,line join=round] (176.51,286.16) --
	(317.11,286.16);

\path[draw=drawColor,line width= 0.3pt,line join=round] (176.51,311.23) --
	(317.11,311.23);

\path[draw=drawColor,line width= 0.3pt,line join=round] (176.51,336.29) --
	(317.11,336.29);

\path[draw=drawColor,line width= 0.3pt,line join=round] (184.15,203.33) --
	(184.15,343.93);

\path[draw=drawColor,line width= 0.3pt,line join=round] (209.22,203.33) --
	(209.22,343.93);

\path[draw=drawColor,line width= 0.3pt,line join=round] (234.28,203.33) --
	(234.28,343.93);

\path[draw=drawColor,line width= 0.3pt,line join=round] (259.34,203.33) --
	(259.34,343.93);

\path[draw=drawColor,line width= 0.3pt,line join=round] (284.40,203.33) --
	(284.40,343.93);

\path[draw=drawColor,line width= 0.3pt,line join=round] (309.46,203.33) --
	(309.46,343.93);

\path[draw=drawColor,line width= 0.6pt,line join=round] (176.51,223.51) --
	(317.11,223.51);

\path[draw=drawColor,line width= 0.6pt,line join=round] (176.51,248.57) --
	(317.11,248.57);

\path[draw=drawColor,line width= 0.6pt,line join=round] (176.51,273.63) --
	(317.11,273.63);

\path[draw=drawColor,line width= 0.6pt,line join=round] (176.51,298.69) --
	(317.11,298.69);

\path[draw=drawColor,line width= 0.6pt,line join=round] (176.51,323.76) --
	(317.11,323.76);

\path[draw=drawColor,line width= 0.6pt,line join=round] (196.69,203.33) --
	(196.69,343.93);

\path[draw=drawColor,line width= 0.6pt,line join=round] (221.75,203.33) --
	(221.75,343.93);

\path[draw=drawColor,line width= 0.6pt,line join=round] (246.81,203.33) --
	(246.81,343.93);

\path[draw=drawColor,line width= 0.6pt,line join=round] (271.87,203.33) --
	(271.87,343.93);

\path[draw=drawColor,line width= 0.6pt,line join=round] (296.93,203.33) --
	(296.93,343.93);
\node[inner sep=0pt,outer sep=0pt,anchor=south west,rotate=  0.00] at (182.90, 209.72) {
	\pgfimage[width=127.82pt,height=127.82pt,interpolate=false]{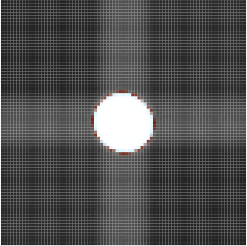}};
\end{scope}
\begin{scope}
\path[clip] (176.51, 40.66) rectangle (317.11,181.26);
\definecolor{drawColor}{gray}{0.92}

\path[draw=drawColor,line width= 0.3pt,line join=round] (176.51, 48.31) --
	(317.11, 48.31);

\path[draw=drawColor,line width= 0.3pt,line join=round] (176.51, 73.37) --
	(317.11, 73.37);

\path[draw=drawColor,line width= 0.3pt,line join=round] (176.51, 98.43) --
	(317.11, 98.43);

\path[draw=drawColor,line width= 0.3pt,line join=round] (176.51,123.49) --
	(317.11,123.49);

\path[draw=drawColor,line width= 0.3pt,line join=round] (176.51,148.56) --
	(317.11,148.56);

\path[draw=drawColor,line width= 0.3pt,line join=round] (176.51,173.62) --
	(317.11,173.62);

\path[draw=drawColor,line width= 0.3pt,line join=round] (184.15, 40.66) --
	(184.15,181.26);

\path[draw=drawColor,line width= 0.3pt,line join=round] (209.22, 40.66) --
	(209.22,181.26);

\path[draw=drawColor,line width= 0.3pt,line join=round] (234.28, 40.66) --
	(234.28,181.26);

\path[draw=drawColor,line width= 0.3pt,line join=round] (259.34, 40.66) --
	(259.34,181.26);

\path[draw=drawColor,line width= 0.3pt,line join=round] (284.40, 40.66) --
	(284.40,181.26);

\path[draw=drawColor,line width= 0.3pt,line join=round] (309.46, 40.66) --
	(309.46,181.26);

\path[draw=drawColor,line width= 0.6pt,line join=round] (176.51, 60.84) --
	(317.11, 60.84);

\path[draw=drawColor,line width= 0.6pt,line join=round] (176.51, 85.90) --
	(317.11, 85.90);

\path[draw=drawColor,line width= 0.6pt,line join=round] (176.51,110.96) --
	(317.11,110.96);

\path[draw=drawColor,line width= 0.6pt,line join=round] (176.51,136.02) --
	(317.11,136.02);

\path[draw=drawColor,line width= 0.6pt,line join=round] (176.51,161.09) --
	(317.11,161.09);

\path[draw=drawColor,line width= 0.6pt,line join=round] (196.69, 40.66) --
	(196.69,181.26);

\path[draw=drawColor,line width= 0.6pt,line join=round] (221.75, 40.66) --
	(221.75,181.26);

\path[draw=drawColor,line width= 0.6pt,line join=round] (246.81, 40.66) --
	(246.81,181.26);

\path[draw=drawColor,line width= 0.6pt,line join=round] (271.87, 40.66) --
	(271.87,181.26);

\path[draw=drawColor,line width= 0.6pt,line join=round] (296.93, 40.66) --
	(296.93,181.26);
\node[inner sep=0pt,outer sep=0pt,anchor=south west,rotate=  0.00] at (182.90,  47.05) {
	\pgfimage[width=127.82pt,height=127.82pt,interpolate=false]{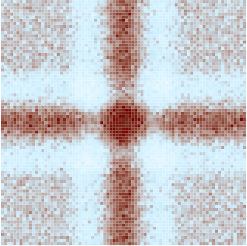}};
\end{scope}
\begin{scope}
\path[clip] ( 30.41,181.26) rectangle (171.01,197.83);
\definecolor{drawColor}{gray}{0.10}

\node[text=drawColor,anchor=base,inner sep=0pt, outer sep=0pt, scale=  0.88] at (100.71,186.52) {$\hat\bta_{WLS}$ original};
\end{scope}
\begin{scope}
\path[clip] (176.51,181.26) rectangle (317.11,197.83);
\definecolor{drawColor}{gray}{0.10}

\node[text=drawColor,anchor=base,inner sep=0pt, outer sep=0pt, scale=  0.88] at (246.81,186.52) {Na\"ive $\hat\bta_{WLS}$ after AIC};
\end{scope}
\begin{scope}
\path[clip] ( 30.41,343.93) rectangle (171.01,360.50);
\definecolor{drawColor}{gray}{0.10}

\node[text=drawColor,anchor=base,inner sep=0pt, outer sep=0pt, scale=  0.88] at (100.71,349.19) {$\hat\bta_{L}$ as in (\ref{ConfSet})};
\end{scope}
\begin{scope}
\path[clip] (176.51,343.93) rectangle (317.11,360.50);
\definecolor{drawColor}{gray}{0.10}

\node[text=drawColor,anchor=base,inner sep=0pt, outer sep=0pt, scale=  0.88] at (246.81,349.19) {$\hat\bta_L$ with WLS-set};
\end{scope}
\begin{scope}
\path[clip] (  0.00,  0.00) rectangle (375.80,375.80);
\definecolor{drawColor}{gray}{0.30}

\node[text=drawColor,anchor=base,inner sep=0pt, outer sep=0pt, scale=  0.90] at ( 50.59, 28.29) {-4};

\node[text=drawColor,anchor=base,inner sep=0pt, outer sep=0pt, scale=  0.90] at ( 75.65, 28.29) {-2};

\node[text=drawColor,anchor=base,inner sep=0pt, outer sep=0pt, scale=  0.90] at (100.71, 28.29) {0};
\node[text=drawColor,anchor=base east,inner sep=0pt, outer sep=0pt, scale=  0.90] at (107.71, 12.29) {$\beta_1$};

\node[text=drawColor,anchor=base,inner sep=0pt, outer sep=0pt, scale=  0.90] at (125.77, 28.29) {2};

\node[text=drawColor,anchor=base,inner sep=0pt, outer sep=0pt, scale=  0.90] at (150.84, 28.29) {4};
\end{scope}
\begin{scope}
\path[clip] (  0.00,  0.00) rectangle (375.80,375.80);
\definecolor{drawColor}{gray}{0.30}

\node[text=drawColor,anchor=base,inner sep=0pt, outer sep=0pt, scale=  0.90] at (196.69, 28.29) {-4};

\node[text=drawColor,anchor=base,inner sep=0pt, outer sep=0pt, scale=  0.90] at (221.75, 28.29) {-2};

\node[text=drawColor,anchor=base,inner sep=0pt, outer sep=0pt, scale=  0.90] at (246.81, 28.29) {0};
\node[text=drawColor,anchor=base east,inner sep=0pt, outer sep=0pt, scale=  0.90] at (252.81, 12.29) {$\beta_1$};

\node[text=drawColor,anchor=base,inner sep=0pt, outer sep=0pt, scale=  0.90] at (271.87, 28.29) {2};

\node[text=drawColor,anchor=base,inner sep=0pt, outer sep=0pt, scale=  0.90] at (296.93, 28.29) {4};
\end{scope}
\begin{scope}
\path[clip] (  0.00,  0.00) rectangle (375.80,375.80);
\definecolor{drawColor}{gray}{0.30}

\node[text=drawColor,anchor=base east,inner sep=0pt, outer sep=0pt, scale=  0.90] at ( 25.46,220.41) {-4};

\node[text=drawColor,anchor=base east,inner sep=0pt, outer sep=0pt, scale=  0.90] at ( 25.46,245.47) {-2};

\node[text=drawColor,anchor=base east,inner sep=0pt, outer sep=0pt, scale=  0.90] at ( 25.46,270.53) {0};
\node[text=drawColor,anchor=base east,inner sep=0pt, outer sep=0pt, scale=  0.90] at ( 15.46,270.53) {$\beta_2$};

\node[text=drawColor,anchor=base east,inner sep=0pt, outer sep=0pt, scale=  0.90] at ( 25.46,295.59) {2};

\node[text=drawColor,anchor=base east,inner sep=0pt, outer sep=0pt, scale=  0.90] at ( 25.46,320.66) {4};
\end{scope}
\begin{scope}
\path[clip] (  0.00,  0.00) rectangle (375.80,375.80);
\definecolor{drawColor}{gray}{0.30}

\node[text=drawColor,anchor=base east,inner sep=0pt, outer sep=0pt, scale=  0.90] at ( 25.46, 57.74) {-4};

\node[text=drawColor,anchor=base east,inner sep=0pt, outer sep=0pt, scale=  0.90] at ( 25.46, 82.80) {-2};

\node[text=drawColor,anchor=base east,inner sep=0pt, outer sep=0pt, scale=  0.90] at ( 25.46,107.86) {0};
\node[text=drawColor,anchor=base east,inner sep=0pt, outer sep=0pt, scale=  0.90] at ( 10.46,107.86) {$\beta_2$};

\node[text=drawColor,anchor=base east,inner sep=0pt, outer sep=0pt, scale=  0.90] at ( 25.46,132.93) {2};

\node[text=drawColor,anchor=base east,inner sep=0pt, outer sep=0pt, scale=  0.90] at ( 25.46,157.99) {4};
\end{scope}
\begin{scope}
\path[clip] (  0.00,  0.00) rectangle (375.80,375.80);
\node[inner sep=0pt,outer sep=0pt,anchor=south west,rotate=  0.00] at (333.61, 48.56) {
	\pgfimage[width= 14.45pt,height= 272.27pt,interpolate=true]{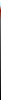}};
\end{scope}
\begin{scope}
\path[clip] (  0.00,  0.00) rectangle (375.80,375.80);
\definecolor{drawColor}{RGB}{0,0,0}

\node[text=drawColor,anchor=base west,inner sep=0pt, outer sep=0pt, scale=  0.88] at (353.56, 48.56 + 9.9 * 27.227) {1.0};
\node[text=drawColor,anchor=base west,inner sep=0pt, outer sep=0pt, scale=  0.88] at (353.56, 48.56 + 9.39 * 27.227) {0.95};
\node[text=drawColor,anchor=base west,inner sep=0pt, outer sep=0pt, scale=  0.88] at (353.56, 48.56 + 8.9 * 27.227) {0.9};
\node[text=drawColor,anchor=base west,inner sep=0pt, outer sep=0pt, scale=  0.88] at (353.56, 48.56 + 7.9 * 27.227) {0.8};
\node[text=drawColor,anchor=base west,inner sep=0pt, outer sep=0pt, scale=  0.88] at (353.56, 48.56 + 6.9 * 27.227) {0.7};
\node[text=drawColor,anchor=base west,inner sep=0pt, outer sep=0pt, scale=  0.88] at (353.56, 48.56 + 5.9 * 27.227) {0.6};
\node[text=drawColor,anchor=base west,inner sep=0pt, outer sep=0pt, scale=  0.88] at (353.56, 48.56 + 4.9 * 27.227) {0.5};
\node[text=drawColor,anchor=base west,inner sep=0pt, outer sep=0pt, scale=  0.88] at (353.56, 48.56 + 3.9 * 27.227) {0.4};
\node[text=drawColor,anchor=base west,inner sep=0pt, outer sep=0pt, scale=  0.88] at (353.56, 48.56 + 2.9 * 27.227) {0.3};
\node[text=drawColor,anchor=base west,inner sep=0pt, outer sep=0pt, scale=  0.88] at (353.56, 48.56 + 1.9 * 27.227) {0.2};
\node[text=drawColor,anchor=base west,inner sep=0pt, outer sep=0pt, scale=  0.88] at (353.56, 48.56 + 0.9 * 27.227) {0.1};
\node[text=drawColor,anchor=base west,inner sep=0pt, outer sep=0pt, scale=  0.88] at (353.56, 48.56 - 0.1 * 27.227) {0.0};

\end{scope}
\begin{scope}
\path[clip] (  0.00,  0.00) rectangle (375.80,375.80);
\definecolor{drawColor}{RGB}{0,0,0}

\node[text=drawColor,anchor=base west,inner sep=0pt, outer sep=0pt, scale=  1] at (329.61, 337.39) {Coverage};
\end{scope}
\begin{scope}
\path[clip] (  0.00,  0.00) rectangle (375.80,375.80);
\definecolor{drawColor}{RGB}{255,255,255}

\path[draw=drawColor,line width= 0.2pt,line join=round] (333.61, 48.56) -- (336.50, 48.56); 
\path[draw=drawColor,line width= 0.2pt,line join=round] (333.61, 48.56 + 27.227) -- (336.50, 48.56 + 27.227);
\path[draw=drawColor,line width= 0.2pt,line join=round] (333.61, 48.56 + 2 * 27.227) -- (336.50, 48.56 + 2 * 27.227);
\path[draw=drawColor,line width= 0.2pt,line join=round] (333.61, 48.56 + 3 * 27.227) -- (336.50, 48.56 + 3 * 27.227);
\path[draw=drawColor,line width= 0.2pt,line join=round] (333.61, 48.56 + 4 * 27.227) -- (336.50, 48.56 + 4 * 27.227);
\path[draw=drawColor,line width= 0.2pt,line join=round] (333.61, 48.56 + 5 * 27.227) -- (336.50, 48.56 + 5 * 27.227);
\path[draw=drawColor,line width= 0.2pt,line join=round] (333.61, 48.56 + 6 * 27.227) -- (336.50, 48.56 + 6 * 27.227);
\path[draw=drawColor,line width= 0.2pt,line join=round] (333.61, 48.56 + 7 * 27.227) -- (336.50, 48.56 + 7 * 27.227);
\path[draw=drawColor,line width= 0.2pt,line join=round] (333.61, 48.56 + 8 * 27.227) -- (336.50, 48.56 + 8 * 27.227);
\path[draw=drawColor,line width= 0.2pt,line join=round] (333.61, 48.56 + 9 * 27.227) -- (336.50, 48.56 + 9 * 27.227);
\path[draw=black,line width= 0.2pt,line join=round] (333.61, 48.56 + 9.5 * 27.227) -- (336.50, 48.56 + 9.5 * 27.227);
\path[draw=black,line width= 0.2pt,line join=round] (333.61, 48.56 + 10 * 27.227) -- (336.50, 48.56 + 10 * 27.227);

\path[draw=drawColor,line width= 0.2pt,line join=round] (345.17, 48.56) -- (348.06, 48.56); 
\path[draw=drawColor,line width= 0.2pt,line join=round] (345.17, 48.56 + 27.227) -- (348.06, 48.56 + 27.227);
\path[draw=drawColor,line width= 0.2pt,line join=round] (345.17, 48.56 + 2 * 27.227) -- (348.06, 48.56 + 2 * 27.227);
\path[draw=drawColor,line width= 0.2pt,line join=round] (345.17, 48.56 + 3 * 27.227) -- (348.06, 48.56 + 3 * 27.227);
\path[draw=drawColor,line width= 0.2pt,line join=round] (345.17, 48.56 + 4 * 27.227) -- (348.06, 48.56 + 4 * 27.227);
\path[draw=drawColor,line width= 0.2pt,line join=round] (345.17, 48.56 + 5 * 27.227) -- (348.06, 48.56 + 5 * 27.227);
\path[draw=drawColor,line width= 0.2pt,line join=round] (345.17, 48.56 + 6 * 27.227) -- (348.06, 48.56 + 6 * 27.227);
\path[draw=drawColor,line width= 0.2pt,line join=round] (345.17, 48.56 + 7 * 27.227) -- (348.06, 48.56 + 7 * 27.227);
\path[draw=drawColor,line width= 0.2pt,line join=round] (345.17, 48.56 + 8 * 27.227) -- (348.06, 48.56 + 8 * 27.227);
\path[draw=drawColor,line width= 0.2pt,line join=round] (345.17, 48.56 + 9 * 27.227) -- (348.06, 48.56 + 9 * 27.227);
\path[draw=black,line width= 0.2pt,line join=round] (345.17, 48.56 + 9.5 * 27.227) -- (348.06, 48.56 + 9.5 * 27.227);
\path[draw=black,line width= 0.2pt,line join=round] (345.17, 48.56 + 10 * 27.227) -- (348.06, 48.56 + 10 * 27.227);

\end{scope}
\end{tikzpicture}
		\caption{Only confidence sets based on the Lasso as in (\ref{ConfSet}) or WLS estimator (left column) achieve a nominal level over the whole parameter space. The former is conservative around the origin and axes.
Na\"ively applying WLS sets to the Lasso or WLS estimator after AIC variable selection (right column) yields undercoverage. }
		\label{quad_fig}
\end{figure}
The probabilities shown are average empirical coverages for all configurations of $\bta_0$.
First consider the classical confidence sets based on the weighted least-squares (WLS) estimator $\hat\bta_{WLS} = \{\X^\top\V(\hat\tht)^{-1}\X\}^{-1}\X^\top\V(\hat\tht)^{-1}\y$ (bottom left), with no variable selection involved.
As the distribution of $\hat\bta_{WLS} - \bta_0$ is independent of $\bta_0$, the coverage based on its confidence set is attained uniformly over the space of coefficient parameters.
One finds that no deviations from the nominal level of $95\%$ can be observed in the simulation.
Next, consider the confidence set based on the Lasso as given in (\ref{ConfSet}) (upper left).
No undercoverage is observed over the entire parameter space. 
Nominal coverage, up to an error of order $O(m^{-1/2})$ due to the estimation of random effects, is attained for $\text{sign}(\beta_1) = -\text{sign}(\beta_2)$. 
The other orthants exhibit a slight overcoverage (light blue), whereas a significant overcoverage (white) occurs at the axes and around the origin.
The latter effects are due to the event of variable selection when a component in $\hat\bta_L$ is set to zero.
Hence, at the axes, a coverage close to $100\%$ is achieved, and the $95\%$-confidence sets prove to be too wide.
These findings are in line with the example of the linear regression model from \cite[Fig. 4]{Ewald2018}.

The overcoverage in all but two orthants is a consequence of Lemma~\ref{minmization_wrt_d}. 
The confidence sets are based on the minimizers in (\ref{objective_function_limiting}), which depend on $\dd$, the signs of the coefficient parameters. 
The nominal coverage probability is attained at the orthants specified by $\dd^\ast = \argmax_{\dd\in\{-1,1\}^p} \big\| \C^{-1/2}\LAM\dd \big\|^2$, corresponding to the upper left and lower right orthants in Fig. \ref{quad_fig} (upper left). 
In these, the coverage probabilities vary between $93.4\%$ and $95.8\%$, which is close to the nominal level of $95\%$.  
This indicates that the additional uncertainty induced by the estimation of $\tht_0$ is not substantial even if $m$ is small. 
If the coefficient signs of $\beta_1$ and $\beta_2$ differ from $\dd^\ast$, the corresponding confidence sets are conservative. 
In Fig. \ref{quad_fig} (upper left), the coverage probabilities vary between $94.6\%$ and $96.8\%$ in such orthants, which is close to the nominal level, too.  

In contrast to the methods on the left column in Fig.~\ref{quad_fig}, which meet the nominal level thoroughly, but without variable selection (bottom left) or conservatively, but with selection (upper left), two additional na\"ive approaches are displayed on the right column.
First, one observes that na\"ively applying classical WLS confidence sets around a WLS estimator $\hat\bta_{WLS}$ after performing variable selection with AIC (bottom right) yields inconsistent coverages over the parameter space.
Type-I-error inflation \citep{Berk2013} occurs in regions with undercoverage (red) in the event of variable selection at the axes.
Another approach is applying a WLS confidence set around the estimator $\hat\bta_L$ (upper right).
Again, these sets are not theoretically justified and indeed, an overcoverage occurs at the origin, whereas a severe undercoverage over the rest of the coefficient parameter space.
Both na\"ive approaches thus yield misleading confidence sets, and their use is inadvisable.

We have also experimented with more complex covariance matrices, e.g., taking serially dependent observations within a cluster. While this increases the computational burden, the overall conclusion remains the same. 
Similarly, if more than two covariates are considered ($p>2$), 
the confidence sets will
meet nominal level in two out of $2^p$ orthants, and show a slight
overcoverage in the remaining $2^{p-1}$. If $\bta_0$ has close to
zero entries, the confidence sets are very conservative, and the coverage
probabilities close to one.

\section{Application} \label{sec:app}

We demonstrate the performance of the confidence sets based on the Lasso and WLS with no variable selection on a data set on the ecological status of lakes in the northeastern United States collected by the \citep{LakeData}.
The set has been thoroughly investigated in previous studies \citep{Opsomer2007, Breidt2007}.
The model of interest is given by
\begin{equation*}
\begin{gathered}
\y = \mathbf{1}_n\beta_1 + \x\beta_2 + \Z\vv + \mathbf{D}\mathbf{u} + \eeps, \\
\vv\sim\mathcal{N}(\boldsymbol{0}_m, \sigma_v^2\mathbf{I}_m), \quad
\mathbf{u}\sim\mathcal{N}(\boldsymbol{0}_K, \sigma_u^2\mathbf{I}_K), \quad
\eeps\sim\mathcal{N}(\boldsymbol{0}_n, \sigma_e^2\mathbf{I}_n).
\end{gathered}
\end{equation*}
The response variable of interest is the acid-neutralizing capacity in equivalents per liter, which indicates the lake's environmental state.
It is related to a design matrix that includes an intercept and the respective elevation in meters.
In total, the $n=551$ lakes are partitioned via $m=19$ hydrological unit codes with $1 \leq n_i \leq 13$, which enter the model as random effect intercept, i.e.,  $\Z = \text{blockdiag}(\mathbf{1}_{n_i})_{i = 1}^m$.
Finally, a penalized spline with $K=80$ knots accounts for spatial proximity effects.
It is included as a second random effect, for which $\mathbf{D}$ is a transformed radial basis as described by \citep{Opsomer2007}. 

The penalization parameters for the Lasso are set to $\lambda_1 = \lambda_2 = 16.9$ by leave-one-out- cross-validation obtained with the \textsf{R}-function \texttt{glmnet}. This value is close to $\sqrt{n}/2 \approx 11.74$, which corresponds to a conservative tuning regime. 
The fitted model does not indicate any deviation from regularity conditions (\ref{c_theta}) - (\ref{c_pd}).
In particular, $\max(\hat{\tht})/\min(\hat{\tht}) \approx 9.99$.
The estimated fixed effects are given as
\begin{equation*}
  \hat\bta_L = (0, -0.0010)^\top, \quad  \hat\bta_{WLS} = (-0.5589, -0.0011)^\top.
\end{equation*}
In previous studies, the intercept has been deemed insignificant \citep{Opsomer2007}.
Indeed, the estimated standard deviations are given by ${s}(\hat\beta_{WLS,1}) = 2.2130$ and ${s}(\hat\beta_{WLS,2}) = 0.0001$.
Consequently, the Lasso estimator sets the intercept equal to zero.
This corresponds to a selection event on the axes in Fig.~\ref{quad_fig} (upper left).
Such suffers from severe overcoverage.
A $99\%$-confidence set $M_L$ based on the Lasso is given by $M$ from (\ref{ConfSet}), with $\hat\tau = \chi^2_{2, 0.99}(1398) \approx  1578$.
The WLS based set $M_{WLS}$ is given as $M$, but $\hat\bta_L$ replaced with $\hat\bta_{WLS}$ and $\hat\tau$ with $\chi^2_{2,0.99} \approx 9$.
The Lasso-based confidence ellipse is much larger than the WLS-based confidence ellipse.
This does not indicate that the former are useless, however.
Much of their additional volume is distributed along the axis of the non-included intercept.

The additional uncertainty induced by the Lasso in case of no selection event can be assessed by performing inference on $\beta_2$ only.
This gives confidence intervals for the elevation coefficient 
\begin{align*}
M_{L}(\beta_2) = \left[ -14.80, -7.13\right]\times 10^{-4}, \quad
M_{WLS}(\beta_2) = \left[ -13.73,  -6.06\right]\times 10^{-4}.
\end{align*}
Both sets are constructed similarly.
While for $M_{L}$, the corresponding quantile is from the non-central $\chi_1^2$-distribution with a non-centrality parameter close to zero, $M_{WLS}$ is based upon a central $\chi_1^2$-quantile.
Therefore, the respective interval lengths are almost equal, with $M_{L}$ being narrowly wider by the sixth decimal place.
However, both intervals differ in location.
The Lasso based $M_{L}(\beta_2)$ is centered around $\hat{\beta}_{L,2}$ and thus covers a different part of the real line than the WLS based $M_{WLS}(\beta_2)$.
Consequently, although both intervals are of almost equal length, inference based on the Lasso indeed differs from WLS-based inference, even in the case of no selection event. 
The conclusions of this analysis remain valid if tuning parameters are set to the theoretical value $\lambda_1 = \lambda_2 = \sqrt{n}/2\approx 11.74$. 

Altogether, this case example confirms that if interest lies in narrow confidence sets, WLS sets are smaller and should be employed.
Simultaneously performing variable selection with the Lasso always evokes uncertainty.
Nevertheless, confidence sets based on the Lasso are not always much larger than their WLS counterparts.
When smaller models are preferable, those sets can be useful for assessing uncertainty and hypothesis testing.

\section{Discussion} \label{sec:disc}

This contribution presents a solution for inference in a
low-dimensional LMM in which the fixed effects are estimated with a
Lasso-type penalization. The uniformly valid
confidence sets for the fixed effects based on a Lasso estimator, which performs estimation and model selection simultaneously, have been constructed. We suggest a two-stage
estimation procedure, where the covariance parameters are estimated
via a REML estimator $\hat\tht$ first, and the coefficient
parameters $\bta_0$ with $\hat\tht$ plugged in second. 
In particular, this allows to treat regression coefficients and covariance
parameters separately. Using this approach, we decisively simplify the
verification of the theoretical properties of parameter estimators.
Finally, it can be shown that the resulting confidence sets are
uniformly valid over both the coefficient and covariance parameters
spaces.

It is well known that employing standard WLS-based confidence sets without
model selection will always yield smaller confidence sets than
post-selection inference \citep{Poetscher2009}. However, selecting
parsimonious models is often a key concern for practitioners. Although
resulting post-selection confidence sets will include additional
uncertainty due to variable selection, Sections~\ref{sec:sims} and
\ref{sec:app} indicate that derived confidence sets are useful. Moreover,
our results provide information about the joint estimation and variable
selection uncertainty of the Lasso in the context of LMMs by quantifying
the probability with which the estimation error $\hat\bta_L - \bta_0$
lies in the proposed ellipse.

The novelty of our method is that, to the best of our knowledge, this
work is the first that considers uniform inference based on the Lasso for an
elaborate covariance structure, as given in LMMs. Another new
important theoretical result of this article is the proof of uniform
consistency of Cram\'er type for the REML estimators of the covariance
parameters. We expect that our approach can serve as a basis for
proper inference for an estimation procedure that penalizes both fixed
and random effects.

The derived confidence regions are based on the Lasso selection
procedure, but are valid regardless of the selected model. In this,
our methodology differs from selective inference methods for LMMs
as proposed by \citep{Ruegamer2022}.

It is important to note that the proposed method does not account for
any stochastic selection of the tuning parameter $\lambda$. Instead,
the result in Theorem \ref{mainres} is a finite sample result that
holds for any a-priori choice of $\lambda$. Setting $\lambda =
n^{1/2}/2$ as in Section \ref{sec:sims} ensures conservative model
selection and uniformly valid inference. If $\lambda$ is estimated or
determined with any data-driven procedure, it emits additional
volatility that has to be quantified. This holds a fortiori for
procedures such as the adaptive Lasso where even the penalty term
itself is random. While the penalty term of the smoothly clipped
absolute deviations estimator (SCAD) \citep{Fan2001} does not exhibit
this property, it is particularly intricate to treat and leads to a
non-convex objective function. Therefore, while these methods behave
similarly to the ordinary Lasso, an extension of our findings for them
requires substantial additional research, which is out of the scope of
the current work. It should be noted that among Lasso, adaptive Lasso, and SCAD, \citep{PoetscherSchneider10} show that in the case of
orthogonal regressors, using the Lasso leads to the shortest intervals
in finite samples, strengthening our approach.

In high dimensions ($p\geq n$), confidence sets in connection with the
Lasso penalization have been addressed, e.g., by the de-biasing
approach of \citep{VdGeerEtAl14} and related works. For such settings,
the ansatz of \citep{Ewald2018} is not applicable as the matrix $\C$ is
not invertible. Moreover, the classical REML methodology is defined in
a low-dimensional setting only, as $\tht_0$ is estimated by the data
projected onto the space orthogonal to the columns of $\X$. Extensions
for REML to higher dimensions lose the property that $\hat\tht$ is
independent of $\bta_0$ \citep{Schelldorfer2011}.


\section{Proofs} \label{sec:proofs}

To prove Theorem~\ref{unifconv}, we first derive the following lemma.

\begin{lemma}  \label{Qlemma}
Let conditions (\ref{c_theta}) - (\ref{c_pd}) hold, $m =
\min\{\rk(\HH_1), \dots, \rk(\HH_r)\}$ and let $K$ be a finite set
drawn with replacement from $\{1,\dots,r\}$ with $|K| > 1$. Then, for any $\tht \in \Theta$,
\begin{equation} \label{tracesplitlemmaeq}
\tr\left\{\prod_{k \in K} \PP(\tht)\HH_k \right\} \leq
\left(\frac{c}{\underline\omega}\right)^{|K|-1} m^{1-{|K|}/{2}} \prod_{k \in K} \nu_k(\tht).
\end{equation}
\end{lemma}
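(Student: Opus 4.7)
The plan is to symmetrize the trace using the projection structure of $\PP(\tht)$, apply a non-commutative H\"older inequality to decouple the product into single-matrix factors, and then use conditions (\ref{c_theta}) and (\ref{c_h}) to convert the resulting spectral quantities into the form $(c/\underline\omega)^{|K|-1}m^{1-|K|/2}\prod_k\nu_k(\tht)$.

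Concretely, I would first write $\PP(\tht)=\V(\tht)^{-1/2}Q\V(\tht)^{-1/2}$ with
$Q:=\I_n-\V(\tht)^{-1/2}\X(\X^\top\V(\tht)^{-1}\X)^{-1}\X^\top\V(\tht)^{-1/2}$
an orthogonal projection satisfying $0\preceq Q\preceq\I_n$. By the cyclic property of the trace,
\[
\tr\Bigl\{\prod_{k\in K}\PP(\tht)\HH_k\Bigr\}=\tr\Bigl\{\prod_{k\in K}Q A_k\Bigr\},\qquad A_k:=\V(\tht)^{-1/2}\HH_k\V(\tht)^{-1/2}\succeq 0.
\]
Setting $L:=|K|$, the non-commutative H\"older inequality for Schatten norms gives $|\tr(\prod_{i=1}^L QA_{k_i})|\le\prod_{i=1}^L\|QA_{k_i}\|_{S_L}$. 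Since $Q\preceq\I_n$ implies $A_kQA_k\preceq A_k^2$, Weyl's monotonicity on the positive cone yields $\|QA_k\|_{S_L}^L=\tr\{(A_kQA_k)^{L/2}\}\le\tr(A_k^L)$, and a standard bound gives $\tr(A_k^L)\le\|A_k\|_{op}^{L-1}\tr(A_k)$.

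Next, conditions (\ref{c_theta}) and (\ref{c_h}) are used to control the spectral quantities of $A_k$. Writing $\tilde\HH:=\sum_l\HH_l$, condition (\ref{c_theta}) gives the sandwich $(c\min(\tht))^{-1}\tilde\HH^{-1}\preceq\V(\tht)^{-1}\preceq\min(\tht)^{-1}\tilde\HH^{-1}$. This, together with condition (\ref{c_h}), bounds $\|A_k\|_{op}\le\eta_1(\tilde\HH^{-1}\HH_k)/\min(\tht)$ and pins down $\tr(A_k)$ to the interval $[\underline\omega\,\rk(\HH_k)/(c\min(\tht)),\overline\omega\,\rk(\HH_k)/\min(\tht)]$. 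Using the identity $\tr(A_k)=\sqrt{\rk(\HH_k)}\,\nu_k(\tht)$, together with the key one-factor estimate $\|A_k\|_{op}\,\rk(\HH_k)\lesssim (c/\underline\omega)\,\tr(A_k)$ to exchange each of the $L-1$ operator-norm factors for $(c/\underline\omega)\,\tr(A_{k_i})/\rk(\HH_{k_i})$ in the H\"older bound, and finally invoking $\rk(\HH_k)\ge m$, the product of exponents on $\rk(\HH_k)$ collapses to $m^{1-L/2}$, giving the claimed inequality.

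The main obstacle is the constant tracking in this last step: the $L-1$ factors of $\|A_{k_i}\|_{op}$ inherited from $\tr(A_k^L)\le\|A_k\|_{op}^{L-1}\tr(A_k)$ must be converted individually into trace-based quantities so that the exponents on $\rk(\HH_k)$, $m$, and $c/\underline\omega$ combine to give exactly $(c/\underline\omega)^{|K|-1}m^{1-|K|/2}\prod_k\nu_k(\tht)$ without any residual $\tht$-dependence. The role of the second part of condition (\ref{c_h}), namely the $O(1)$ bound on $\sum_{j=1}^p\eta_j\{\tilde\HH^{-1}\HH_k\}$, is precisely to absorb $\eta_1(\tilde\HH^{-1}\HH_k)$ into a \emph{uniform} constant over $\tht\in\ThetaC$, which is what ultimately licenses the passage from a purely spectral estimate to the uniform-in-$\tht$ bound stated in \eqref{tracesplitlemmaeq}.
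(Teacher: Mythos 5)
Your proposal is correct and follows essentially the same route as the paper: the same factorization $\PP(\tht)=\V(\tht)^{-1/2}\B\V(\tht)^{-1/2}$ with $\B$ an orthogonal projection, the same H\"older-type trace inequality reducing the product to $\prod_{k}\tr(\A_k^{|K|})^{1/|K|}$, and the same conversion of $\tr(\A_k^{|K|})\le \|\A_k\|_{op}^{|K|-1}\tr(\A_k)$ into $(c/\underline\omega)^{|K|-1}\tr(\A_k)^{|K|}/\rk(\HH_k)^{|K|-1}$ via condition (\ref{c_theta}) and the first part of (\ref{c_h}), followed by $\rk(\HH_k)\ge m$. One small correction: the second part of (\ref{c_h}) is not needed here, since $\HH_k\preceq\sum_l\HH_l$ gives $\eta_1\{(\sum_l\HH_l)^{-1}\HH_k\}\le 1$ for free (equivalently, the paper bounds the eigenvalues of $\theta_k\A_k$ by one directly); using that instead of an ``absorbed $O(1)$'' is what yields the exact constant $(c/\underline\omega)^{|K|-1}$ stated in \eqref{tracesplitlemmaeq}.
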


The result is shown using that the eigenvalues of the matrices of
interest are smaller or equal to one and subsequently employing
regularity condition $(\ref{c_h})$.

\begin{proof}[\textbf{\upshape Proof:}]
Consider the symmetric and positive semi-definite matrices
\begin{equation}
\begin{aligned}
\A_k & = \A_k(\tht) = \V(\tht)^{-1/2}\HH_k\V(\tht)^{-1/2}, \quad k \in\{1,\dots, r\}, \\
\B = \B(\tht) &= \mathbf{I}_n - \V(\tht)^{-1/2}\X\{\X^\top\V(\tht)^{-1}\X\}^{-1}\X^\top\V(\tht)^{-1/2},
\end{aligned}
\end{equation}
so that $\tr\{\prod_{k \in K}\PP(\tht)\HH_k\} = \tr\{\prod_{k \in
K}\A_k\B\}$. Since $\B$ is a projection matrix with eigenvalues zero
or one, proceeding as in the proof of Theorem 1.1 (H\"older
inequality) from \citep{Manjegani2007} gives
\begin{align}\label{Manjegani}
\tr\left(\prod_{k \in K} \A_k\B \right)
\leq \sum_{i=1}^n \prod_{k \in K} \left\{\eta_i(\A_k)\eta_i(\B)\right\}
& \leq \sum_{i=1}^n \prod_{k \in K} \eta_i(\A_k)
\leq \prod_{k \in K} \tr\left(\A_k^{|K|}\right)^{1/|K|}.
\end{align}
We continue by evaluating $\tr(\A_k^{|K|})$. First note that
conditions (\ref{c_theta}) and (\ref{c_h}) imply
\begin{equation}\label{nubound}
c^{-1}\underline\omega \leq \theta_{k}\frac{\tr(\A_k)}{\rk(\HH_k)} \leq
c\;\overline\omega.
\end{equation}
For $k = r$, let $\eta_i' =
\eta_i(\HH_r^{-1}\sum_{l=1}^{r-1}\theta_{l}/\theta_{r}\HH_l) \geq
0$. Then, the above inequality yields
\begin{align*}
\tr\left(\A_r^{|K|}\right)
& = \frac{1}{\theta_{r}^{|K|}}\tr\left\{\left( \I_n +
\HH_r^{-1}\sum_{l=1}^{r-1}\frac{\theta_{l}}{\theta_{r}}\HH_k\right)^{-{|K|}}\right\}
= \frac{1}{\theta_{r}^{|K|}} \sum_{i=1}^n \frac{1}{(1 + \eta_i')^{|K|}}  \leq \frac{1}{\theta_{r}^{|K|}} \sum_{i=1}^n \frac{1}{1 + \eta_i'}\\
&= \frac{\tr(\A_r)}{\theta_{r}^{{|K|}-1}}
\leq \left(\frac{c}{\underline\omega}\right)^{{|K|}-1}
\frac{\tr(\A_r)^{|K|}}{\rk(\HH_r)^{{|K|}-1}},
\end{align*}
and, similarly for $k \in\{ 1, \dots, r-1\}$, let $\eta_i'' = \eta_i
\big\{\big(\sum_{l \neq k}\theta_{l}/\theta_{k}\HH_l\big)^{-1}
\HH_k\big\} \geq 0$. This gives
\begin{align*}
\tr(\A_k^{|K|})
& = \frac{1}{\theta_{k}^{|K|}}\tr\left\{\left(\I_n - \left[\I_n + \left(\sum_{l \neq k}
\frac{\theta_{l}}{\theta_{k}}\HH_l\right)^{-1}\HH_k \right]^{-1}\right)^{|K|}\right\}
= \frac{1}{\theta_{k}^{|K|}} \sum_{i=1}^n \left(\frac{\eta_i''}{1 + \eta_i''}\right)^{|K|} \\
& \leq \frac{1}{\theta_{k}^{|K|}} \sum_{i=1}^n \frac{\eta_i''}{1 + \eta_i''}
= \frac{\tr(\A_k)}{\theta_{k}^{{|K|}-1}}
\leq \left(\frac{c}{\underline\omega}\right)^{{|K|}-1}\frac{\tr(\A_k)^{|K|}}{\tr(\HH_k)^{{|K|}-1}}.
\end{align*}
Altogether, for $k\in\{1,\dots, r\}$, this gives
\begin{equation}\label{tracelemma}
\tr(\A_k^{|K|}) \leq \frac{c^{|K|-1}}{\underline\omega^{|K|-1}}\frac{\tr(\A_k)^{|K|}}{\rk(\HH_k)^{|K|-1}}.
\end{equation}
Plugging this into (\ref{Manjegani}) yields
\begin{equation*}
\prod_{k\in K} \tr\left( \A_k^{|K|}\right)^{1/|K|}
\leq \prod_{k\in K} \left(\frac{c}{\underline\omega}\right)^{\frac{|K|-1}{|K|}}
\frac{\nu_k(\tht)}{\rk(\HH_k)^{\frac{|K|-2}{2|K|}}}
\leq \left(\frac{c}{\underline\omega}\right)^{|K|-1} m^{1-|K|/2}\prod_{k \in K}\nu_k(\tht),
\end{equation*}
which gives the claim.
\end{proof}
Note that under the conditions of Lemma \ref{Qlemma} but with $|K| =
1$, (\ref{Manjegani}) implies that
\begin{equation}\label{Kis1}
\tr\left\{\PP(\tht)\HH_k\right\} \leq \tr\left\{\A_k(\tht)\right\}.
\end{equation}

Before getting into the proof of Lemma~\ref{unifconv}, we denote from
now on $\Q_i(\tht) = \PP(\tht)\HH_i$, $\Q_{ij}(\tht) =
\prod_{l\in\{i,j\}} \PP(\tht)\HH_l$, $\Q_{ijk}(\tht) =
\prod_{l\in\{i,j,k\}} \PP(\tht)\HH_l$ to facilitate notation. In
particular, Lemma~\ref{Qlemma} implies that
\begin{equation} \label{trsplitREML}
  -\mbox{E}_{\tht_0}\left\{
  \frac{\partial^2\ell_R}{\partial\theta_i^2} (\tht)
  \bigg|_{\tht_0}\right\}
  = \frac{1}{2}\tr\left\{\Q_{ii}(\tht_0) \right\}
  = O\{\nu_i(\tht_0)^2\}.
\end{equation}

It is straightforward to adapt Lemma~\ref{Qlemma} to the case in which
$\PP(\tht)$ is replaced by $\V(\tht)^{-1}$:

\begin{cor} \label{corollary}
Let model (\ref{GenLM}) and (\ref{c_theta}) - (\ref{c_pd}) hold and
let $K$ be a set drawn with replacement from $\{1, \dots,  r\}$ with $|K|>1$. Then, for $\tht\in\Theta$, 
\begin{equation}
\tr\left\{\prod_{k \in K} \V(\tht)^{-1}\HH_i \right\} \leq
\left(\frac{c}{\underline\omega}\right)^{|K|-1} \prod_{k \in K}
\frac{\nu_i(\tht)}{\rk(\HH_i)^{\frac{|K|-2}{2|K|}}}.
\end{equation}
\end{cor}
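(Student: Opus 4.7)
The plan is that the corollary should follow almost verbatim from the proof of Lemma~\ref{Qlemma}, with only one simplification: the projection matrix $\B(\tht)=\mathbf{I}_n-\V(\tht)^{-1/2}\X\{\X^\top\V(\tht)^{-1}\X\}^{-1}\X^\top\V(\tht)^{-1/2}$ is absent when $\PP(\tht)$ is replaced by $\V(\tht)^{-1}$. Setting $\A_k=\V(\tht)^{-1/2}\HH_k\V(\tht)^{-1/2}$ and using the cyclic property of the trace, I would first rewrite
\begin{equation*}
\tr\left\{\prod_{k\in K}\V(\tht)^{-1}\HH_k\right\}=\tr\left\{\prod_{k\in K}\A_k\right\}.
\end{equation*}

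Next I would apply the H\"older-type trace inequality for products of positive semi-definite matrices from Manjegani (2007), the same inequality already invoked in (\ref{Manjegani}). Without the $\B$ factor it takes the cleaner form $\tr\{\prod_{k\in K}\A_k\}\leq\prod_{k\in K}\tr(\A_k^{|K|})^{1/|K|}$, and no bound on the eigenvalues of $\B$ is needed. I would then insert the key inequality (\ref{tracelemma}) established inside the proof of Lemma~\ref{Qlemma}; that inequality depends only on conditions (\ref{c_theta}) and (\ref{c_h}) and is entirely $\B$-free, so it applies unchanged. Taking the $1/|K|$-th power yields
\begin{equation*}
\tr(\A_k^{|K|})^{1/|K|}\leq\left(\frac{c}{\underline\omega}\right)^{(|K|-1)/|K|}\frac{\tr(\A_k)}{\rk(\HH_k)^{(|K|-1)/|K|}}.
\end{equation*}

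The final step is algebraic: substituting $\tr(\A_k)=\nu_k(\tht)\,\rk(\HH_k)^{1/2}$ and multiplying over $k\in K$, the constants combine to $(c/\underline\omega)^{|K|-1}$ and the exponents simplify via $\rk(\HH_k)^{1/2}/\rk(\HH_k)^{(|K|-1)/|K|}=\rk(\HH_k)^{-(|K|-2)/(2|K|)}$, which matches the stated bound.

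The main obstacle is essentially none, since all the heavy lifting — the H\"older-type trace inequality and the fundamental eigenvalue bound (\ref{tracelemma}) — has already been done in Lemma~\ref{Qlemma}. The only real observation to make is that the factor $m^{1-|K|/2}$ appearing in Lemma~\ref{Qlemma} came precisely from bounding $\prod_{k\in K}\rk(\HH_k)^{-(|K|-2)/(2|K|)}\leq m^{1-|K|/2}$; here we keep the $\rk(\HH_k)$ terms inside the product instead of collapsing them to $m$, and otherwise the argument is identical. The only care required is in tracking the exponents when recombining the product, but no new estimates or assumptions enter.
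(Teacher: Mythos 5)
Your proposal is correct and follows exactly the route the paper intends: the paper gives no separate proof of Corollary~\ref{corollary} beyond the remark that Lemma~\ref{Qlemma} adapts directly once $\PP(\tht)$ is replaced by $\V(\tht)^{-1}$, which is precisely dropping the projection factor $\B$ from (\ref{Manjegani}) and reusing (\ref{tracelemma}) while keeping the $\rk(\HH_k)$ factors inside the product rather than collapsing them to $m^{1-|K|/2}$. Your exponent bookkeeping is right (and, as a side note, the subscripts $i$ in the displayed statement of the corollary are evidently typos for $k$).
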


Above results are related to the reverse inequality from \cite[p. 232, Problem 4.1.P13]{Horn2013}, namely that for a positive definite matrix $\Z\in\mathbb{R}^{n\times n}$, $n\in\mathbb{N}$, it holds
\begin{equation} \label{HornJohnson}
	\tr(\Z^2) \geq \frac{\tr(\Z)^2}{\rk{(\Z)}}.
\end{equation}

%

%
\begin{proof}[\textbf{\upshape Proof of Theorem~\ref{unifconv}:}]
%
This proof is adapted to account for uniformity w.r.t. $\tht_0$ from Lemma 7.2 (i) from \citep{Jiang1996} and closely follows the lines of \citep{Weiss1971}.
%
The four boundedness properties below on the derivatives of the log-likelihood are surrogates of the boundedness conditions imposed on the log-likelihood directly by \citep{Moran1970, Wald1949}.
By a Taylor expansion, the conditions below also include the boundedness of the log-likelihood.
They are further required to omit the compactness condition of $\Theta$, by constructing a compact set $\Theta^{\prime}_{\epsilon}$, see details below.
For all $i,j\in\{1,\dots, r\}$, it holds for any $\tht_0$ that
\begin{enumerate}[(i)]
\item $\mbox{E}_{\tht_0}\left\{ \displaystyle \frac{\partial \ell_R}{\partial \theta_i} (\tht)\bigg|_{\tht_0}  \right\} = 0$,
\item $\displaystyle \frac{1}{\nu_i(\tht_0)} \frac{\partial \ell_R}{\partial \theta_i} (\tht)\bigg|_{\tht_0} = O_P(1)$,
\item 
For $\J(\tht_0)$ with
$\displaystyle J_{ij}(\tht_0) = -\frac{1}{\nu_i(\tht_0)\nu_j(\tht_0)}\mbox{E}_{\tht_0}\left\{\frac{\partial^2 \ell_R}{\partial\theta_i\theta_j}(\tht)\bigg|_{\tht_0}\right\}$ it holds that
$\eta_r \{\J(\tht_0)\} \geq c' > 0$, for some constant $c'>0$.
\item $\zeta(\tht, \tht_0) = \displaystyle \frac{1}{\nu_i(\tht_0)\nu_j(\tht_0)} \frac{\partial^2 \ell_R}{\partial \theta_i\partial \theta_j} (\tht) + J_{ij}(\tht_0) = O_P\left(\frac{q}{\sqrt{m}}\right)$ for $\tht\in\Theta_{q} =  \{ \tht: \nu_i(\tht_0)|\theta_i - \theta_{0,i}| \leq q \}$,
\end{enumerate}
where $q= m^{1/(6+\delta)}$ for some $\delta>0$ and $m = \min_{k\in\{1,\dots, r\}}\rk(\HH_i)$.
For readability, suppress the dependency from $\tht$ when the argument is clear from the context.
Now, (i)-(iv) are shown.
\begin{enumerate}[(i)]
%
\item Note that ${\partial\PP}/{\partial\theta_i} = -\PP\HH_i\PP$ and $\PP\V\PP = \PP$, as well as
\begin{align*}
\frac{\partial \ell_R}{\partial \theta_i}\bigg|_{\tht_0}
	&=  {\frac{1}{2}}\y^\top\PP\HH_i\PP\y -{\frac{1}{2}}\tr\left( \PP{\HH_i} \right).
\end{align*}
The claim now follows after taking expectations.
%
\item As $\text{E}( {\partial \ell_R}/{\partial \theta_i}|_{\tht_0} )=0$ by (i),
 Chebyshev's inequality can be applied, it holds uniformly and gives that for any $\epsilon>0$ there exists $k>0$,  such that
\begin{align*}
	\sup_{\tht_0\in\ThetaC}\Prob_{\tht_0}\left\{\text{Var}_{\tht_0}\left( \frac{\partial \ell_R}{\partial \theta_i}\bigg|_{\tht_0} \right)^{-1/2} \left| \frac{\partial \ell_R}{\partial \theta_i}\bigg|_{\tht_0} \right| \geq k \right\} \leq \frac{1}{k^2} < \epsilon,
\end{align*}
and thus
\begin{align*}
\frac{\partial \ell_R}{\partial \theta_i}\bigg|_{\tht_0}
= O_P\left\{ \text{Var}_{\tht_0}\left( \frac{\partial \ell_R}{\partial \theta_i}\bigg|_{\tht_0} \right)^{1/2} \right\}
= O_P(\nu_i),
\end{align*}
where the last equation follows by $\text{Var}_{\tht_0}( {\partial \ell_R}/{\partial \theta_i}|_{\tht_0} ) = -\mbox{E}_{\tht_0}\{\partial^2\ell_R/\partial\theta_i^2 |_{\tht_0}\}$
and (\ref{trsplitREML}).
%
\item By (\ref{nubound}), for all $i,j \in\{1,\dots, r\}$,
\begin{align*}
J_{ij} 
= \frac{\tr(\Q_{ij})}{ 2\nu_i\nu_j }
\geq
\frac{\theta_{0,i}\theta_{0,j}\tr(\Q_{ij})}{2c^2\overline\omega^2\sqrt{\rk(\HH_i)\rk(\HH_j)}}.
\end{align*}
By a similar argument,
\begin{align*}
	\theta_{0,i}\theta_{0,j}\tr(\Q_{ij})
	&= \theta_{0,i} \theta_{0,j} \tr\big[ \V(\tht_0)^{-1}[\I_n - \X\{\X^\top\V(\tht_0)^{-1}\X\}^{-1}\X^\top\V(\tht_0)^{-1}]\HH_i\PP(\tht_0)\HH_j \big]\\
	&\geq \theta_{0,j}\frac{\min(\tht)}{\max(\tht)} \tr\big[ \V(\mathbf{1}_r)^{-1}[\I_n - \X\{\X^\top\V(\tht_0)^{-1}\X\}^{-1}\X^\top\V(\tht_0)^{-1}]\HH_i\PP(\tht_0)\HH_j \big] \\
	&\geq \frac{\theta_{0,j}}{c} \tr\big[ \V(\mathbf{1}_r)^{-1}[\I_n - c\X\{\X^\top\V(\mathbf{1}_r)^{-1}\X\}^{-1}\X^\top\V(\mathbf{1}_r)^{-1}]\HH_i\PP(\tht_0)\HH_j \big] \\
	&\geq \frac{1}{c^2} \tr\{ \PP(\mathbf{1}_r, c)\HH_i\PP(\mathbf{1}_r, c)\HH_j \}
\end{align*}
It follows that
\begin{equation*}
	J_{ij}(\tht_0) \geq \frac{ \tr\{ \PP(\mathbf{1}_r, c)\HH_i\PP(\mathbf{1}_r, c)\HH_j \} }{2c^4\overline\omega^2\sqrt{\rk(\HH_i)\rk(\HH_j)}}.
\end{equation*}
Now, for $\aaa\in\mathbb{R}^r$ with $\aaa^\top\aaa = 1$ eigenvector to $\eta_r\{\J(\tht_0)\}$,
\begin{align*}
	\eta_r\{\J(\tht_0)\} = \aaa^\top\J(\tht_0)\aaa = \sum_{i=1}^r\sum_{j=1}^r a_i a_j J_{ij}(\tht_0) \geq \frac{\eta_r\{\KK\}}{2c^4\overline\omega^2} \geq \frac{b}{2c^4\overline\omega^2} > 0.
\end{align*}

%
\item First, note that for $\tht\in\Theta_q$, (\ref{nubound}) gives 
\begin{equation*}
	\frac{\underline\omega\sqrt{\rk(\HH_i)}}{c\theta_i}
	\leq \nu_i(\tht) \leq 
	\frac{c\overline\omega\sqrt{\rk(\HH_i)}}{\theta_i}.
\end{equation*}
For any $t\in[0,1]$ and $\tilde\tht = \tht_0 + t(\tht - \tht_0)$ this implies with (\ref{c_h}) and Corollary \ref{corollary} that 
\begin{align*}
	\tr\left\{ \V(\tilde\tht)^{-1}\HH_j\V(\tilde\tht)^{-1}\HH_i \right\}
	&\leq \frac{c}{\theta_{0,j}\theta_{0,i}} \tr\left\{ \V(\mathbf{1}_r)^{-1}\HH_j\V(\mathbf{1}_r)^{-1}\HH_i \right\} \\&\leq \frac{c^2}{\underline\omega} \frac{\tr\left\{ \V(\mathbf{1}_r)^{-1}\HH_j\right\} \tr\left\{\V(\mathbf{1}_r)^{-1}\HH_i \right\}}{\theta_{0,j}\theta_{0,i}\sqrt{\rk(\HH_j)\rk(\HH_i)}}\\
	&\leq \frac{c^2\overline\omega^2}{\underline\omega}\frac{\sqrt{\rk(\HH_j)\rk(\HH_i)}}{\theta_{0,j}\theta_{0,i}}\leq c^3\frac{\overline\omega^2}{\underline\omega^2}\nu_i(\tht)\nu_j(\tht).
\end{align*}
A Taylor expansion around $\tht_0$ yields
\begin{equation}
\begin{aligned}\label{nutheta}
\nu_i(\tht)
  &= \nu_i(\tht_0) + \frac{1}{\sqrt{\rk(\HH_i)}}\sum_{j=1}^r (\theta_j - \theta_{0,j}) \tr\left\{ \V(\tilde\tht)^{-1}\HH_j\V(\tilde\tht)^{-1}\HH_i \right\} \\&\leq \nu_i(\tht_0)\left\{1 + \frac{\overline\omega^2c^3}{\underline\omega^2\sqrt{m}}\sum_{j=1}^r \nu_j(\tht)(\theta_j - \theta_{0,j})\right\}\\
  &\leq \nu_i(\tht_0)\left\{1 + \frac{r\overline\omega^2c^3q}{\underline\omega^2\sqrt{m}}\right\}.
\end{aligned}
\end{equation}
Below, abbreviate the constant $c' = {r\overline\omega^2c^3}/{\underline\omega^2}$ and note that $q / \sqrt{m} \rightarrow 0$. 
This implies that
\begin{align} \label{thtscale}
  \tht \in \Theta_q \quad \Rightarrow \quad
  \nu_i(\tht)|\theta_i - \theta_{0,i}| \leq q\left(1 + c'\frac{q}{\sqrt{m}}\right).
\end{align}
Now, by (\ref{thtscale}), it follows for $\tht\in\Theta_q$:  
\begin{equation*}
	\V(\tht_0) = \V(\tht) + \sum_{k=1}^r (\theta_{0,k} - \theta_k)\HH_k
	\leq \V(\tht) + \sum_{k=1}^r \HH_k \frac{q}{\nu_k(\tht)}\left(1 + c'\frac{q}{\sqrt{m}}  \right).
\end{equation*}
With Lemma~\ref{Qlemma}, this gives
\begin{align*}
\mbox{E}_{\tht_0}\left\{-\frac{1}{\nu_i(\tht_0)\nu_j(\tht_0)} \frac{\partial^2\ell_R}{\partial\theta_i\partial\theta_j}(\tht) \right\}
	&= -\frac{\tr\{\Q_{ij}(\tht)\}/2 - \tr\left\{\Q_{ij}(\tht)\PP(\tht)\V(\tht_0)\right\}}{\nu_i(\tht_0)\nu_j(\tht_0)}  \\
	&\leq \frac{\tr\{\Q_{ij}(\tht)\}}{2\nu_i(\tht_0)\nu_j(\tht_0)} + \sum_{k=1}^r q\frac{\tr\{\Q_{ijk}(\tht)\}}{\nu_i(\tht)\nu_j(\tht)\nu_k(\tht)} \left(1 + c'\frac{q}{\sqrt{m}}  \right)^3 \\&= \frac{\tr\{\Q_{ij}(\tht)\}}{2\nu_i(\tht_0)\nu_j(\tht_0)}  + O\left(\frac{q}{\sqrt{m}}  \right),
\end{align*}
for $m = \min_{k\in\{1,\dots, r\}}\rk(\HH_i)$ and all  $\tht\in\Theta_{q}$.
Similarly, now dropping the argument for all quantities depending on $\tht\in\Theta_q$ for the sake of readability, i.e. $\nu_i = \nu_i(\tht)$, 
\begin{align*}
\mbox{Var}_{\tht_0}\left\{-\frac{1}{\nu_i(\tht_0)\nu_j(\tht_0)} \frac{\partial^2\ell_R}{\partial\theta_i\partial\theta_j} \right\}
	&= 2\frac{\tr\left\{ \Q_{ij}\PP\V(\tht_0)\Q_{ij}\PP\V(\tht_0) \right\}}{\nu_i(\tht_0)^2\nu_j(\tht_0)^2} 
	+ O\left(\frac{q^2}{m}\right) \\
	&\hspace{-3.5cm}= O\left\{ \frac{\tr(\Q_{ij}\Q_{ij})}{\nu_i^2\nu_j^2}
					+ 2q\sum_{k=1}^r\frac{\tr(\Q_{ij}\Q_{ijk})}{\nu_i^2\nu_j^2\nu_k}
					+ q^2\sum_{k,l=1}^r\frac{\tr(\Q_{ijk}\Q_{ijl})}{\nu_i^2\nu_j^2\nu_l\nu_k} + \frac{q^2}{m}\right\}\\&= O\left( \frac{1}{m^{1/2}} + \frac{2q}{m^{3/2}} + \frac{q^2}{m^2} + \frac{q^2}{m}\right) = O\left(\frac{q^2}{m}\right).
\end{align*}
Putting the previous two results together and proceeding as in (ii) with Chebyshev's inequality, this gives that for any $\tht_0$ it holds
\begin{equation*}
  -\frac{1}{\nu_i(\tht_0)\nu_j(\tht_0)}   \frac{\partial^2\ell_R}{\partial\theta_i\partial\theta_j}(\tht) = \frac{\tr\{\Q_{ij}(\tht)\}}{2\nu_i(\tht_0)\nu_j(\tht_0)} + O_P\left(\frac{q}{\sqrt{m}}\right)
\end{equation*}
for $\tht\in\Theta_{q}$.
To prove (iv), this must hold for $\tht_0$ on the right-hand side. 
Again, let $t\in[0,1]$ and $\tilde\tht = \tht_0 + t(\tht - \tht_0)$. A Taylor expansion of $\tr\{\Q_{ij}(\tht)\}$ around $\tr\{\Q_{ij}(\tht_0)\}$ gives 
\begin{align*}
	\frac{\tr\{\Q_{ij}(\tht)\}}{2\nu_i(\tht_0)\nu_j(\tht_0)} &= \frac{\tr\{\Q_{ij}(\tht_0)\}}{2\nu_i(\tht_0)\nu_j(\tht_0)} + \sum_{k=1}^r (\theta_{0,k}-\theta_k) \frac{\tr\{\Q_{ijk}(\tilde\tht) + \Q_{ikj}(\tilde\tht)\}}{\nu_i(\tht_0)\nu_j(\tht_0)}\\&\hspace*{-2cm}\leq J_{ij}(\tht_0) + q \frac{\tr\{\Q_{ijk}(\tilde\tht) + \Q_{ikj}(\tilde\tht)\}}{\nu_k(\tht_0)\nu_i(\tht_0)\nu_j(\tht_0)} \\
	&\hspace*{-2cm}\leq J_{ij}(\tht_0) + q \frac{\tr\{\Q_{ijk}(\tilde\tht) + \Q_{ikj}(\tilde\tht)\}}{\nu_k(\tilde\tht)\nu_i(\tilde\tht)\nu_j(\tilde\tht)}\left(1 + c'\frac{q}{\sqrt{m}}  \right)^3= J_{ij}(\tht_0) + O\left( \frac{q}{\sqrt{m}} \right)
\end{align*}
Altogether, 
\begin{equation*}
-\frac{1}{\nu_i(\tht_0)\nu_j(\tht_0)}\frac{\partial^2 \ell_R}{\partial \theta_i\partial \theta_j}(\tht)
	= J_{ij}(\tht_0) + O_P\left(\frac{q}{\sqrt{m}}\right).
\end{equation*}
\end{enumerate}

The second part of the proof mimics the reasoning of \citep{Weiss1971}. Let $\J(\tht_0) = \{J_{ij}(\tht_0)\}_{ij}$ and
$\mathbf{s}(\tht_0) = \{s_1(\tht_0),s_2(\tht_0),\dots,s_r(\tht_0)\}^\top$ with $s_i(\tht_0) = \nu_i(\tht_0)^{-1} \partial \ell_R/\partial \theta_i |_{\tht_0}$ and define $\tht^{\prime}$ such that
\begin{align*}
\nu(\tht_0) \circ (\tht^{\prime} - \tht_0) =  \J(\tht_0)^{-1}\mathbf{s}(\tht_0).
\end{align*}
We will show that there exists a $\hat\tht$ so that
\begin{align*}
	\nu_i(\tht_0)|\hat\theta_i - \theta_{0,i}|
	&\leq \nu_i(\tht_0)|\hat\theta_i - \theta_{i}'| + \nu_i(\tht_0)|\theta_{i}' - \theta_{0,i}|
\end{align*}
By (ii) and (iii), the last term is $\nu_i(\tht_0)|\theta_{i}' - \theta_{0,i}| =O_P(1)$. It remains to show that $\nu_i(\tht_0)|\hat\theta_i - \theta_{i}'| = O_P(1)$.
This is shown by proving that a maximum of $(\ref{REML})$ is attained in a region close to $\tht'$. 

Now, for any $\tht\in\Theta_q$,
\begin{equation}
\begin{aligned} \label{Taylor_for_theta}
\ell_R(\tht) &= \ell_R(\tht_0)
					+ \sum_{i=1}^r \nu_i(\tht_0)(\theta_i - \theta_{0,i})s_i(\tht_0) \\
					&\hspace{40pt}- \frac{1}{2}\sum_{i=1}^r\sum_{j=1}^r \nu_i(\tht_0)(\theta_i - \theta_{0,i}) \nu_j(\tht_0)(\theta_j - \theta_{0,j}) J_{ij}(\tht_0) + R(\tht,\tilde\tht)\\
   			 &= \ell_R(\tht_0)
					+ \frac{1}{2}\mathbf{s}(\tht_0)^\top \J(\tht_0)^{-1} \mathbf{s}(\tht_0) \\
					&\hspace{40pt}- \frac{1}{2}\sum_{i=1}^r\sum_{j=1}^r \nu_i(\tht_0)(\theta_i^{\prime} - \theta_{i}) \nu_j(\tht_0)(\theta_j^{\prime} - \theta_{j}) J_{ij}(\tht_0)
					 + R(\tht,\tilde\tht),
\end{aligned}
\end{equation}
where $R(\tht,\tilde\tht) = -\frac{1}{2}\sum_{i=1}^r\sum_{j=1}^r  \nu_i(\tht_0)(\theta_i - \theta_{0,i}) \nu_j(\tht_0)(\theta_j - \theta_{0,j})\zeta(\tilde\tht, \tht_0)$ for some $\tilde\theta_i = \theta_{0,i} + t(\theta_{i} - \theta_{0,i})$ where $t\in[0,1]$.
Now consider the set $\Theta_{\epsilon_n}^{\prime} = \big\{\tht: \nu_i(\tht_0)|\theta_i-\theta_i^{\prime}| < \epsilon_n \big\}$ and its boundary $\partial{\Theta}_{\epsilon_n}^{\prime}$ and $\overline{\Theta}_{\epsilon_n}^{\prime} = {\Theta}_{\epsilon_n}^{\prime} \cup \partial{\Theta}_{\epsilon_n}^{\prime}$. 
For any $\tht\in\overline{\Theta}_{\epsilon_n}^{\prime}$ with $\epsilon_n\rightarrow0$, 
\begin{equation*}
	\nu_i(\tht_0)|\theta_i - \theta_{0,i}|
	\leq \nu_i(\tht_0)|\theta_i - \theta_{i}'|  + \nu_i(\tht_0)|\theta_i' - \theta_{0,i}|
	= O_P(1),
\end{equation*}
that is for all $\varepsilon>0$ there exists a constant $C$ and $N_\varepsilon$ such that for all $n\geq N_\varepsilon$: 
$1 - \varepsilon \leq \inf_{\tht_0\in\Theta} \Prob_{\tht_0}\{\nu_i(\tht_0)|{\theta}_{i} - \theta'_i|< C\}$. 
Since $q\rightarrow\infty$, there exists $N\geq N_\varepsilon$ such that $q>C$. 
Together, for any $\tht\in\overline{\Theta}_{\epsilon_n}^{\prime}$ and any 
$\varepsilon>0$ there exists $N\in\mathbb{N}$, such that  for all $n\geq N$:
\begin{align*}
	1 - \varepsilon 
	\leq \inf_{\tht_0\in\Theta} \Prob_{\tht_0}\left\{\nu_i(\tht_0)|{\theta}_{i} - \theta'_i|< C\right\}
	\leq \inf_{\tht_0\in\Theta} \Prob_{\tht_0}\left\{\nu_i(\tht_0)|{\theta}_{i} - \theta'_i|< q\right\},
\end{align*}
i.e., $\inf_{\tht_0\in\Theta} \Prob_{\tht_0}( \overline\Theta'_{\epsilon_n} \subset \Theta_q ) \rightarrow 1$. 
This implies that 
\begin{align*}
	&\inf_{\tht_0\in\Theta} \Prob_{\tht_0}\left\{
\sup_{\tht\in\partial{\Theta}_{\epsilon_n}^{\prime}} | R(\tht,\tilde\tht)|
\leq 
\sup_{\tht\in\Theta_q} | R(\tht, \tilde\tht)|  \right\} 
\\\geq\phantom{+}\;&\inf_{\tht_0\in\Theta} \Prob_{\tht_0}\left\{
\sup_{\tht\in\partial{\Theta}_{\epsilon_n}^{\prime}} | R(\tht,\tilde\tht)|
\leq
\sup_{\tht\in\Theta_q} | R(\tht, \tilde\tht)| \;\bigg|\; \partial\Theta'_{\epsilon_n} \subset \Theta_q\right\}\inf_{\tht_0\in\Theta} \Prob_{\tht_0}\left( \partial\Theta'_{\epsilon_n} \subset \Theta_q \right)\\ 
+&\inf_{\tht_0\in\Theta} \Prob_{\tht_0}\left\{
\sup_{\tht\in\partial{\Theta}_{\epsilon_n}^{\prime}} | R(\tht,\tilde\tht)|
\leq
\sup_{\tht\in\Theta_q} | R(\tht, \tilde\tht)| \;\bigg|\; \partial\Theta'_{\epsilon_n} \not\subset \Theta_q\right\}\inf_{\tht_0\in\Theta} \Prob_{\tht_0}\left( \partial\Theta'_{\epsilon_n} \not\subset \Theta_q \right) 
\rightarrow 1.
\end{align*}
Second-to-last, consider
\begin{align*}
\delta(\epsilon_n) = \min_{\tht\in\partial{\Theta}_{\epsilon_n}^{\prime}}
	\frac{1}{2}\sum_{i=1}^r\sum_{j=1}^r \nu_i(\tht_0)(\theta_i^{\prime} - \theta_{i}) \nu_j(\tht_0)(\theta_j^{\prime} - \theta_{j}) J_{ij}(\tht_0) = \frac{\epsilon_n^2}{2}\mathbf{1}_r^\top\mathbf{J}(\tht_0)\mathbf{1}_r \geq c'\frac{r\epsilon_n^2}{2} > 0
\end{align*}
by (iii). Note that $\delta(\epsilon_n)$ is not stochastic, increasing in $\epsilon_n$ and $\delta(0)=0$. 
Further, by (iv),
$
2\sup_{\tht\in\Theta_q} | R(\tht, \tilde\tht)| = O_P(q^3/\sqrt{m})
$, 
i.e., for all $\varepsilon>0$ there exists a constant $C'$ and $N'_\varepsilon$ such that for all $n\geq N'_\varepsilon$ it holds that
	$1 - \varepsilon \leq \inf_{\tht_0\in\Theta} \Prob_{\tht_0}\{2\sup_{\tht\in\Theta_q} | R(\tht, \tilde\tht)|< C' {q^3}/{\sqrt{m}}\}$. 
Now set $\epsilon_n\rightarrow0$ so that it exists $N'>N'_{\varepsilon}$ such that for all $n>N'\colon\epsilon_n^2rc'/2 > C'q^3/\sqrt{m}$.
This gives that for all $n>N'$: 
\begin{align*}
	1 - \varepsilon &\leq \inf_{\tht_0\in\Theta} \Prob_{\tht_0}\left\{2\sup_{\tht\in\Theta_q} | R(\tht, \tilde\tht)|< C' \frac{q^3}{\sqrt{m}}\right\} \\&\leq \inf_{\tht_0\in\Theta} \Prob_{\tht_0}\left\{2\sup_{\tht\in\Theta_q} | R(\tht, \tilde\tht)|< c'\frac{r\epsilon_n^2}{2} \right\} \leq \inf_{\tht_0\in\Theta} \Prob_{\tht_0}\left\{2\sup_{\tht\in\Theta_q} | R(\tht, \tilde\tht)|< \delta(\epsilon_n)\right\}.
\end{align*}
Finally, this implies that for $\tht'\in\Theta_q$ and by (\ref{Taylor_for_theta})
\begin{align*}
\lim_{n\rightarrow\infty} \inf_{\tht_0\in\ThetaC} \Prob_{\tht_0}\left\{   \ell_R(\tht')  >
\max_{\tht\in\partial{\Theta}_{\epsilon_n}^{\prime}}
 \ell_R(\tht) \right\} &= \lim_{n\rightarrow\infty} \inf_{\tht_0\in\ThetaC}
		\Prob_{\tht_0}\left\{ \delta(\epsilon_n) > \max_{\tht\in\partial{\Theta}_{\epsilon_n}^{\prime}} R(\tht,\tilde\tht) - R(\tht',\tilde{\tht}^{\prime}) \right\}\\
	&\geq \lim_{n\rightarrow\infty} \inf_{\tht_0\in\ThetaC}
		\Prob_{\tht_0}\left\{ \delta(\epsilon_n) > 2\sup_{\tht\in\Theta_q} | R(\tht, \tilde\tht)| \right\} = 1,
\end{align*}
i.e., with probability going to one, $\ell_R$ attains a local maximum in $\Theta'_{\epsilon_n}$. 
Denote $\hat\tht$ as a local maximum of $\ell_R$ in $\Theta'_{\epsilon_n}$ if one exists and set $\hat\tht\not\in\Theta'_{\epsilon_n}$ otherwise.  
Then, above statement gives that 
for all $\varepsilon>0$ there exists $N^{\prime\prime}_\varepsilon$ such that for all $n\geq N^{\prime\prime}_\varepsilon$: $\inf_{\tht_0\in\ThetaC}\Prob_{\tht_0}(\hat\tht\in\Theta'_{\epsilon_n})\geq 1-\varepsilon$. 
Now, for any constant $C^{\prime\prime}>0$ and any $\varepsilon>0$ let 
$N^{\prime\prime}$ be such that $C^{\prime\prime}>\epsilon_{N^{\prime\prime}}$ and $N^{\prime\prime}>N^{\prime\prime}_\varepsilon$. 
Then, for all $n>N^{\prime\prime}$, 
\begin{align*}
	\sup_{\tht_0\in\Theta}\Prob_{\tht_0}\left\{\nu_i(\tht_0)|\hat{\theta}_{i} - \theta'_i|\geq C^{\prime\prime}\right\} \leq \phantom{+}\;&\sup_{\tht_0\in\Theta}\Prob_{\tht_0}\left\{\nu_i(\tht_0)|\hat{\theta}_{i} - \theta'_i|\geq C^{\prime\prime}, \hat\tht\in\Theta'_{\epsilon_n} \right\} \\+ 
	   &\sup_{\tht_0\in\Theta}\Prob_{\tht_0}\left\{\nu_i(\tht_0)|\hat{\theta}_{i} - \theta'_i|\geq C^{\prime\prime}, \hat\tht\not\in\Theta'_{\epsilon_n} \right\} \\
	\leq &\sup_{\tht_0\in\Theta}\Prob_{\tht_0}\left\{\hat\tht\not\in\Theta'_{\epsilon_n} \right\} \leq \epsilon, 
\end{align*}
that is $\nu_i(\tht_0)|\hat{\theta}_{0,i} - \theta'_i| = O_P(1)$. 
\end{proof}

%
To address the infimum over $\Theta$, we use the following result.
\begin{lemma} \label{pullerroutlemma}
For $n\in\mathbb{N}$, let $(X_n)$ and $(Y_n)$ be sequences of real-valued continuous random variables with finite variances, in $x$ and $y$ uniformly bounded conditional density $p_{X_n|Y_n=y}(x)=O(1)$
 and $X_n = O_P(a_n)$ and $Y_n=O_P(a_nb_n)$ with sequences $(a_n), (b_n)\in\mathbb{R}$ with $b_n = o(1)$. Then, as $n \to \infty$,
\begin{align*}
\Prob\big(X_n + Y_n \leq a_n\big) &= \Prob(X_n\leq a_n) + O(b_n).
\end{align*}
\end{lemma}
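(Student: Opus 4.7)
The plan is to recast both probabilities as expectations of the conditional CDF of $X_n$ given $Y_n$, and then exploit the uniform conditional density bound to obtain a Lipschitz-type estimate on the integrand. Writing $F_{X_n\mid Y_n}(t) = \Prob(X_n \leq t \mid Y_n)$, the hypothesis $p_{X_n\mid Y_n=y}(x) \leq M$ (uniformly in $x,y,n$) makes the map $t \mapsto F_{X_n\mid Y_n=y}(t)$ Lipschitz with constant $M$ for every $y$ and $n$. Conditioning on $Y_n$ and averaging then gives
\begin{align*}
\bigl|\Prob(X_n + Y_n \leq a_n) - \Prob(X_n \leq a_n)\bigr|
& \leq \mbox{E}\bigl|F_{X_n\mid Y_n}(a_n - Y_n) - F_{X_n\mid Y_n}(a_n)\bigr| \leq M\, \mbox{E}|Y_n|.
\end{align*}

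To convert the $O_P(a_nb_n)$ rate for $Y_n$ into a deterministic control on the remaining expectation, I would use truncation. For any fixed $\varepsilon > 0$, the hypothesis $Y_n = O_P(a_nb_n)$ furnishes constants $K_\varepsilon, N_\varepsilon$ with $\Prob(|Y_n| > K_\varepsilon a_n b_n) \leq \varepsilon$ for $n \geq N_\varepsilon$. Splitting the expectation on the event $A_n = \{|Y_n| \leq K_\varepsilon a_n b_n\}$ and its complement, the Lipschitz step contributes at most $M K_\varepsilon a_n b_n$ on $A_n$, while on $A_n^c$ the integrand is bounded by $1$ and contributes at most $\varepsilon$, yielding
\begin{align*}
\bigl|\Prob(X_n + Y_n \leq a_n) - \Prob(X_n \leq a_n)\bigr| \leq M K_\varepsilon a_n b_n + \varepsilon.
\end{align*}

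In the intended application of the lemma, $a_n$ appears as a fixed $\chi^2$-quantile in Theorem \ref{mainres} and is therefore $O(1)$, so the first summand is already $O(b_n)$. The main obstacle is closing the gap introduced by the additive $\varepsilon$: I would let $\varepsilon = \varepsilon_n \downarrow 0$ diagonally in $n$, using the finite-variance hypothesis together with Chebyshev's inequality $\Prob(|Y_n| > t) \leq \mbox{Var}(Y_n)/t^2$ to quantify how fast $K_\varepsilon$ may be taken to grow as $\varepsilon \downarrow 0$, and then match $\varepsilon_n$ to $K_{\varepsilon_n}$ so that $K_{\varepsilon_n} a_n b_n + \varepsilon_n = O(b_n)$. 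This matching between the $O_P$-tail of $Y_n$ and the additive slack $\varepsilon_n$ is the technically delicate point; once it is carried out, both summands are $O(b_n)$ and the claim follows.
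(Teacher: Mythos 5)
Your approach is the same as the paper's: condition on $Y_n$, use the uniform bound on the conditional density to get a Lipschitz estimate on $t\mapsto F_{X_n\mid Y_n=y}(t)$, and reduce the error to a moment of $Y_n$. The paper's proof writes $\Prob(X_n+Y_n\leq a_n)$ via the convolution formula, shows $\int_{a_n-t}^{a_n}p_{X_n\mid Y_n=t}(z)\,dz=O(t/a_n)$, and integrates against $p_{Y_n}$ to land on an error of order $O\{\mbox{E}(Y_n/a_n)\}$, which it then declares to be $O(b_n)$. Two points of comparison. First, your Lipschitz constant is $M$ rather than $M/a_n$: the paper obtains the extra $1/a_n$ by Taylor-expanding the conditional CDF of the \emph{normalized} variable $X_n/a_n$, i.e., it implicitly reads the hypothesis $p_{X_n\mid Y_n=y}(x)=O(1)$ as a bound on the density at the scale of $a_n$. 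Under your literal reading you need $a_n=O(1)$ to conclude, which happens to hold in the application (there $a_n=t/(1+\xi)=O(1)$) but is not part of the lemma's hypotheses; this is a genuine ambiguity in the statement that you resolve one way and the paper the other.

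Second, and more importantly, the step you flag as "technically delicate" — matching $\varepsilon_n$ to $K_{\varepsilon_n}$ so that $K_{\varepsilon_n}a_nb_n+\varepsilon_n=O(b_n)$ — cannot be carried out from the stated hypotheses. The assumption $Y_n=O_P(a_nb_n)$ gives, for each fixed $\varepsilon$, \emph{some} constant $K_\varepsilon$ with no control on how $K_\varepsilon$ grows as $\varepsilon\downarrow 0$, and "finite variance" for each $n$ does not supply a uniform bound on $\mbox{Var}(Y_n)/(a_nb_n)^2$, so Chebyshev does not rescue the diagonalization. You have therefore correctly located the soft spot, but your proof does not close it — and neither, strictly speaking, does the paper's: the passage from the $O_P$ hypothesis to $\mbox{E}|Y_n|/a_n=O(b_n)$ is asserted without justification. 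What actually makes both arguments work is the tacit strengthening of the hypothesis on $Y_n$ from $O_P(a_nb_n)$ to an $L^1$ (or $L^2$) bound $\mbox{E}|Y_n|=O(a_nb_n)$, which does hold for the specific $Y_n$ arising in Lemma~\ref{uniflimpiv} since it is built from quantities with explicit moment bounds. With that reading, your truncation machinery is unnecessary: the single line $|\Prob(X_n+Y_n\leq a_n)-\Prob(X_n\leq a_n)|\leq M\,\mbox{E}|Y_n|/a_n=O(b_n)$ finishes the proof.
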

The asymptotic result is clear as convergence in probability implies convergence in distribution.
The above result further specifies the rate of convergence.

\begin{proof}[\textbf{\upshape Proof:}]
First, let $\phi(s,t) = P(X_n + s \leq a_n | Y_n = t )$ and consider a Taylor expansion for 
$f(x) = P(X_n / a_n + x \leq 1 | Y_n = t )$ around $x = 0$, which gives $\phi(s,t) = \phi(0,t) + O(s/a_n) $.
This implies $\int_{a_n-t}^{a_n} p_{X_n|Y_n = t}(z) \;dz = \phi(0,t) - \phi(t,t) = O(t/a_n) $.
Using convolution the formula we obtain
\begin{equation}
\begin{aligned}
\Prob\big(X_n + Y_n \leq a_n\big)
	&= \int_{-\infty}^{a_n} p_{X_n+Y_n}(z) \;dz = \int_{-\infty}^{a_n} \int_{-\infty}^{\infty} p_{X_n,Y_n}(z-t,t) \;dt \;dz\\&= \int_{-\infty}^{a_n} \int_{-\infty}^{\infty} p_{X_n|Y_n = t}(z-t) p_{Y_n}(t) \;dt\;dz\\
	&= \int_{-\infty}^{\infty} p_{Y_n}(t) \int_{-\infty}^{a_n-t} p_{X_n|Y_n = t}(z) \;dz\;dt \\&= \int_{-\infty}^{\infty} p_{Y_n}(t) \bigg\{\int_{-\infty}^{a_n} p_{X_n|Y_n = t}(z) \;dz+ \int_{a_n-t}^{a_n} p_{X_n|Y_n = t}(z) \;dz\bigg\} \;dt \\
	&= \int_{-\infty}^{\infty} p_{Y_n}(t) \bigg\{\int_{-\infty}^{a_n} p_{X_n|Y_n = t}(z) \;dz+ O\left(\frac{t}{a_n}\right)\bigg\} \;dt\\&= \int_{-\infty}^{\infty} \int_{-\infty}^{a_n} p_{X_n,Y_n}(z,t)  \;dz \;dt + O\left\{\mbox{E}\left( \frac{Y_n}{a_n}\right) \right\}\\
	&= \int_{-\infty}^{\infty} \int_{-\infty}^{a_n} p_{X_n}(z)p_{Y_n|X_n=z}(t)  \;dz \;dt + O(b_n)= \Prob(X_n\leq a_n) + O(b_n),
\end{aligned}
\end{equation}
which gives the claim.
\end{proof}
The next two results are helpful to prove Theorem~\ref{mainres}.
\begin{lemma} \label{uniflimpiv}
Let model (\ref{GenLM}) and (\ref{c_theta})-(\ref{c_pd}) hold and set $m=\min\{ \rk(\HH_1), \dots, \rk(\HH_r)\}\rightarrow\infty$ and $\hat t = t + \omega >0$ with
\begin{align*}
t = O\left(p+ \left\| \C^{-1/2}\LAM \dd \right\|^2 \right), \qquad \omega =  O_P\left( \frac{1}{\sqrt{m}}\left\| \C^{-1/2}\LAM \dd \right\|^2 \right).
\end{align*}
Then, there exists a sequence of local maximizers $\hat\tht$ of (\ref{REML})
such that
\begin{align*}
\inf_{\substack{\dd\in\{-1,1\}^p\\\tht_0\in\ThetaC}}\Prob_{\tht_0}\left\{ \hat\uu_\dd \in E\left(\hat\C, \hat t\right) \right\}
	&= \inf_{\substack{\dd\in\{-1,1\}^p\\\tht_0\in\ThetaC}}\Prob_{\tht_0}\left\{ \uu_\dd \in E\left(\C,t\right) \right\} + O\left(\frac{1}{\sqrt{m}}\right)
\end{align*}
uniformly over $\tht_0\in\Theta$. 
\end{lemma}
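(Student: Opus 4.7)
The strategy is to relate the quadratic form $\hat\uu_\dd^\top\hat\C\hat\uu_\dd = (\hat\w-\LAM\dd)^\top\hat\C^{-1}(\hat\w-\LAM\dd)$ to its $\tht_0$-version $\uu_\dd^\top\C\uu_\dd = (\w-\LAM\dd)^\top\C^{-1}(\w-\LAM\dd)$ by a perturbation argument driven by Theorem~\ref{unifconv}, and then to convert the resulting almost-sure approximation into an approximation of tail probabilities by invoking Lemma~\ref{pullerroutlemma}. Note that the latter quadratic form is distributed as a non-central $\chi^2_p$ with non-centrality $\|\C^{-1/2}\LAM\dd\|^2$, so it is $O_P(p + \|\C^{-1/2}\LAM\dd\|^2) = O_P(t)$ uniformly in $\dd,\tht_0$, which fixes the scale $a_n = t$ used at the end.

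First, I would Taylor expand $\V(\hat\tht)^{-1}$ around $\V(\tht_0)^{-1}$ via $\partial\V(\tht)^{-1}/\partial\theta_k = -\V(\tht)^{-1}\HH_k\V(\tht)^{-1}$, giving $\hat\C - \C$ and $\hat\w - \w$ as linear combinations of $\hat\theta_k - \theta_{0,k}$ with matrix coefficients of the form $n^{-1}\X^\top\V(\tht_0)^{-1}\HH_k\V(\tht_0)^{-1}\X$ and vector coefficients $n^{-1/2}\X^\top\V(\tht_0)^{-1}\HH_k\V(\tht_0)^{-1}\eeps$, plus a second-order remainder. By Theorem~\ref{unifconv}, $|\hat\theta_k - \theta_{0,k}| = O_P(\nu_k(\tht_0)^{-1})$ uniformly, and conditions (\ref{c_theta})--(\ref{c_h}) together with Lemma~\ref{Qlemma} and Corollary~\ref{corollary} show that the products $\theta_{0,k}\nu_k(\tht_0)^{-1}$ carried by each coefficient are of order $m^{-1/2}$, yielding the uniform bounds $\|\hat\C - \C\| = O_P(m^{-1/2})$ and $\|\hat\w - \w\| = O_P(\sqrt{p/m})$.

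Next, decompose
\begin{equation*}
\hat\uu_\dd^\top\hat\C\hat\uu_\dd - \uu_\dd^\top\C\uu_\dd
= (\w-\LAM\dd)^\top(\hat\C^{-1}-\C^{-1})(\w-\LAM\dd)
+ 2(\w-\LAM\dd)^\top\hat\C^{-1}(\hat\w-\w)
+ (\hat\w-\w)^\top\hat\C^{-1}(\hat\w-\w).
\end{equation*}
Splitting each occurrence of $\w-\LAM\dd$ into the random mean-zero piece $\w$ and the deterministic piece $-\LAM\dd$, every cross-term is bounded by Cauchy--Schwarz combined with $\|\w\| = O_P(\sqrt p)$ and the perturbation bounds from the previous step. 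The pieces that are quadratic in $\LAM\dd$ contribute the deterministic-looking shift of order $\|\C^{-1/2}\LAM\dd\|^2/\sqrt{m}$ which is exactly what is absorbed into $\omega$; the remaining contributions are $O_P(p/\sqrt{m}) + O_P(\sqrt{p}\,\|\C^{-1/2}\LAM\dd\|/\sqrt m) = O_P(t/\sqrt m)$ uniformly. Consequently $Y_n := \hat\uu_\dd^\top\hat\C\hat\uu_\dd - \uu_\dd^\top\C\uu_\dd - \omega = O_P(a_n b_n)$ with $a_n = t$ and $b_n = m^{-1/2}$.

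Finally I would apply Lemma~\ref{pullerroutlemma} with $X_n = \uu_\dd^\top\C\uu_\dd$ and $Y_n$ as above, noting that the non-central $\chi^2_p$ density is uniformly bounded on the relevant range and that $\hat\tht$ depends on $\y$ only through $\PP(\tht)\eeps$, which keeps the conditional density computation tractable. Since every estimate above is uniform in $\dd\in\{-1,1\}^p$ and $\tht_0\in\ThetaC$, the $O(1/\sqrt m)$ remainder commutes with the infimum. The main obstacle is the middle step: tracking the $\tht_0$-dependence of the Taylor remainders so that all scaling factors $\nu_k(\tht_0)$, $\theta_{0,k}$ and trace bounds cancel to give a truly uniform $m^{-1/2}$ rate and a clean split between the deterministic shift absorbed by $\omega$ and the stochastic fluctuation absorbed by Lemma~\ref{pullerroutlemma}; this requires repeated invocation of Lemma~\ref{Qlemma} and Corollary~\ref{corollary} together with the bound $\nu_k(\tht_0)\theta_{0,k} \asymp \sqrt{\rk(\HH_k)}$ from (\ref{nubound}).
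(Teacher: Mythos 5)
Your proposal is correct and follows essentially the same route as the paper: a first-order Taylor/perturbation argument in $\tht$ driven by the uniform rate $\nu_k(\tht_0)|\hat\theta_k-\theta_{0,k}|=O_P(1)$ from Theorem~\ref{unifconv} (combined with $\nu_k(\tht_0)\theta_{0,k}\asymp\sqrt{\rk(\HH_k)}\geq\sqrt{m}$ and the trace/eigenvalue bounds from the regularity conditions) to show $\hat\uu_\dd^\top\hat\C\hat\uu_\dd-\uu_\dd^\top\C\uu_\dd=O_P\{(p+\xi)/\sqrt m\}$, followed by Lemma~\ref{pullerroutlemma} to pull this perturbation out of the coverage probability. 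The only cosmetic difference is that the paper differentiates the scalar $\|\C^{1/2}\uu_\dd\|^2$ directly and controls the derivative via Chebyshev on its mean and variance, whereas you expand $\hat\C$ and $\hat\w$ separately and bound the resulting three cross-terms by Cauchy--Schwarz; both yield the same multiplicative $1+O_P(m^{-1/2})$ perturbation after normalizing by $1+\xi$.
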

%
\begin{proof}[\textbf{\upshape Proof:}]
First, let $\xi = \|\C^{-1/2}\LAM\dd\|^2$ and observe that $\|\C^{1/2}\uu_\dd\|^2 \sim \chi^2_p( \xi )$.
This implies that $\|\C^{1/2}\uu_\dd\|^2 = O_P(1+\xi)$ for all $\tht_0$.
Now, consider its derivative with respect to $\theta_i$, $i\in\{1,\dots, r\}$.
\begin{align*}
\frac{\partial}{\partial\theta_i} \|\C^{1/2}\uu_\dd\|^2
	&= \|\OMEGA_i^{1/2}(\w-\LAM\dd)\|^2 + 2(\w-\LAM\dd)^\top\C^{-1}\frac{\partial\w}{\partial\theta_i},
\end{align*}
where $\OMEGA_i = \C^{-1}\X^\top\V^{-1}{\HH_i}\V^{-1}\X\C^{-1}/n$ as $\partial\C^{-1}/\partial\theta_i = \OMEGA_i$. Now,
\begin{align*}
\mbox{E}\left(\frac{\partial}{\partial\theta_i} \|\C^{1/2}\uu_\dd\|^2 \right)
	&= \|\OMEGA_i^{1/2}\LAM\dd\|^2 - \tr(\C\OMEGA_i)
	\leq \|\OMEGA_i^{1/2}\LAM\dd\|^2 + \tr(\C\OMEGA_i).
\end{align*}
Let $\A_i = \V^{-1/2}\HH_i\V^{-1/2}$ and $\B = \V^{-1/2}\X\C^{-1}(\LAM\dd\dd^\top\LAM + \C)\C^{-1}\X^\top\V^{-1/2}/n$. 
Both $\A_i$ and $\B$ are positive semi-definite. 
Since $\rk(\B) = p$ and $\eta_1(\B) = O(1 + \xi)$, 
\begin{align*}
\mbox{E}\left(\frac{\partial}{\partial\theta_i} \|\C^{1/2}\uu_\dd\|^2 \right)
	&\leq \tr(\A_i\B)
	\leq \sum_{j=1}^p \eta_j(\A_i)\eta_j(\B) 
	 = O\left\{ (1+\xi)\sum_{j=1}^p\eta_j(\A_i)\right\}
\end{align*}
for all $\tht_0$.
Similarly, lengthy calculations give that
\begin{align*}
\mbox{Var}\left(\frac{\partial}{\partial\theta_i} \|\C^{1/2}\uu_\dd\|^2 \right)
	&= 2\tr(\OMEGA_i\C\OMEGA_i\C)
	- 4\left\|\C^{1/2}\OMEGA_i\LAM\dd\right\|^2
	+\frac{4}{{n}}\left\|\V^{-1/2}\HH_i\V^{-1}\X\C^{-1}\LAM\dd\right\|^2\\
	&\leq 2\tr(\OMEGA_i\C\OMEGA_i\C)
	+ 4\left\|\C^{1/2}\OMEGA_i\LAM\dd\right\|^2
	+\frac{4}{{n}}\left\|\V^{-1/2}\HH_i\V^{-1}\X\C^{-1}\LAM\dd\right\|^2.
\end{align*}
Proceeding as above, for all $\tht_0$, 
\begin{align*}
\mbox{Var}\left(\frac{\partial}{\partial\theta_i} \|\C^{1/2}\uu_\dd\|^2 \right)
	&= O\left[ (1+\xi)\left\{\sum_{j=1}^p\eta_j(\A_i)\right\}^2\right].
\end{align*}
Noting that $c\eta_j\{\A_i(\mathbf{1}_r)\} \geq \theta_{0,i} \eta_j\{\A_i(\tht_0)\} \geq c^{-1}\eta_j\{\A_i(\mathbf{1}_r)\}$, $j \in\{1,\dots, n\}$,
Chebyshev's inequality gives the representation
\begin{align} \label{approxofderiv}
\frac{\partial}{\partial\theta_i}  \big\|\C^{1/2}\uu_\dd \big\|^2
	&= O_P\left\{ \frac{\|\C^{1/2}\uu_\dd\|^2}{\theta_{0,i}}\right\}
	= O_P\left\{ \|\C^{1/2}\uu_\dd\|^2 \frac{\nu_i}{\sqrt{m}}\right\}, 
\end{align}
for all $\tht_0$. 
By Theorem~\ref{unifconv}, a Taylor expansion for $\|\hat\C^{1/2}\hat\uu_\dd \|^2$ around $\tht_0$ gives
\begin{align*}
\big\|\hat\C^{1/2}\hat\uu_\dd \big\|^2
	&= \big\|\C^{1/2}\uu_\dd \big\|^2 + O_P\bigg\{ \big(\hat\tht - \tht_0\big)^\top
		\frac{\partial}{\partial\tht}  \big\|\C^{1/2}\uu_\dd \big\|^2  \bigg\}
	= \big\|\C^{1/2}\uu_\dd \big\|^2 \big\{ 1 + O_P( m^{-1/2} ) \big\},
\end{align*}
which holds for all $\tht_0$.
Eventually, let $X_n = \|\C^{1/2}\uu_\dd\|^2/(1+\xi)$, $Y_n = O_P\{m^{-1/2}X_n + m^{-1/2}\xi/(1+\xi)\} = O_P\{b_nX_n\}$ for all $\tht_0$ with $b_n = m^{-1/2}$ and $a_n = t(1+\xi)^{-1}$.
Then,
\begin{align*}
\inf_{\tht_0\in\ThetaC} \Prob_{\tht_0}\left( \big\|\hat\C^{1/2}\hat\uu_\dd\big\|^2 \leq \hat{t} \right)
	&= \inf_{\tht_0\in\ThetaC} \Prob_{\tht_0}\left\{ X_n + Y_n \leq a_n \right\}.
\end{align*}
Finally, Lemma~\ref{pullerroutlemma} gives the claim.
\end{proof}
%
\begin{lemma} \label{minmization_wrt_d}
Let model (\ref{GenLM}) and (\ref{c_theta})-(\ref{c_linear}) hold. Then, for $t>0$,
\begin{align*}
\argmin_{\dd\in\{-1,1\}^p} \Prob_{\tht_0}\Big\{ \uu_\dd \in E\big(\C,t\big) \Big\}
	&= \argmax_{\dd\in\{-1,1\}^p} \big\| \C^{-1/2}\LAM\dd \big\|^2.
\end{align*}
\end{lemma}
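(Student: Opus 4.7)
The plan is to express the probability on the left-hand side in terms of the CDF of a non-central chi-squared distribution and then use the well-known monotonicity of this CDF in its non-centrality parameter.

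First, I would recall from the discussion preceding \eqref{objective_function_limiting} that $\uu_\dd \sim \mathcal{N}_p(-\C^{-1}\LAM\dd, \C^{-1})$. Therefore $\C^{1/2}\uu_\dd \sim \mathcal{N}_p(-\C^{-1/2}\LAM\dd,\I_p)$, which implies
\begin{equation*}
\uu_\dd^\top \C\, \uu_\dd = \big\|\C^{1/2}\uu_\dd\big\|^2 \sim \chi^2_p\big(\big\|\C^{-1/2}\LAM\dd\big\|^2\big),
\end{equation*}
a non-central chi-squared distribution on $p$ degrees of freedom with non-centrality parameter $\xi(\dd) = \|\C^{-1/2}\LAM\dd\|^2$. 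Since $E(\C,t)=\{\zz : \zz^\top\C\zz \leq t\}$, one has
\begin{equation*}
\Prob_{\tht_0}\big\{ \uu_\dd \in E(\C,t) \big\} = F_{p,\,\xi(\dd)}(t),
\end{equation*}
where $F_{p,\xi}$ denotes the CDF of the $\chi^2_p(\xi)$ distribution.

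Next, I would invoke the classical fact that for fixed $t>0$ and fixed $p$, the map $\xi \mapsto F_{p,\xi}(t)$ is strictly decreasing on $[0,\infty)$. This can be justified by writing the non-central $\chi^2_p(\xi)$ as a Poisson mixture of central chi-squared distributions: if $N \sim \text{Poisson}(\xi/2)$ and $W\mid N \sim \chi^2_{p+2N}$, then $W\sim\chi^2_p(\xi)$, so
\begin{equation*}
F_{p,\xi}(t) = \sum_{k=0}^\infty e^{-\xi/2}\frac{(\xi/2)^k}{k!} F_{p+2k,0}(t),
\end{equation*}
and the coefficients $e^{-\xi/2}(\xi/2)^k/k!$ stochastically increase in $\xi$ while $F_{p+2k,0}(t)$ is decreasing in $k$; strict monotonicity follows. (Alternatively, one may argue directly via a stochastic-dominance coupling, since increasing the mean of a Gaussian shifts $\|\C^{1/2}\uu_\dd\|^2$ to larger values in distribution.)

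Combining the two steps, $\dd \mapsto \Prob_{\tht_0}\{\uu_\dd \in E(\C,t)\}$ is a strictly decreasing function of the non-centrality parameter $\xi(\dd)=\|\C^{-1/2}\LAM\dd\|^2$, so any minimizer of the probability over the finite set $\{-1,1\}^p$ coincides with a maximizer of $\|\C^{-1/2}\LAM\dd\|^2$, which gives the claim. The only subtle point is justifying strict monotonicity of $F_{p,\xi}(t)$ in $\xi$ rigorously; the Poisson-mixture representation handles this cleanly and avoids any differentiation under the integral.
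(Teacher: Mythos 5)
Your proof is correct. Note that the paper itself does not supply an argument for this lemma at all: it simply defers to Proposition 4 of Ewald and Schneider (2018). Your write-up is essentially a self-contained reconstruction of what underlies that cited result. The computation $\C^{1/2}\uu_\dd \sim \mathcal{N}_p(-\C^{-1/2}\LAM\dd,\I_p)$ is right (and matches the non-centrality convention $\xi=\|\C^{-1/2}\LAM\dd\|^2$ used later in the paper's proof of Theorem~\ref{mainres}), the identification of the coverage probability with the non-central $\chi^2_p$ CDF at $t$ is immediate from the definition of $E(\C,t)$, and the Poisson-mixture argument does deliver strict monotonicity of $\xi\mapsto F_{p,\xi}(t)$ for fixed $t>0$, which is exactly what is needed to identify the argmin set with the argmax set over the finite set $\{-1,1\}^p$. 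The only cosmetic remark is that the existence of $\C^{-1/2}$ really rests on condition~(\ref{c_x}) rather than only (\ref{c_theta})--(\ref{c_linear}), but that is an imprecision in the lemma's statement, not in your argument.
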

This result is given in \cite[Prop. 4]{Ewald2018}.
%
\begin{proof}[\textbf{\upshape Proof of Theorem~\ref{mainres}:}]
Consider $\tau =   \max_{\dd} \chi^2_{p, 1-\alpha} ( \xi )$ with $\xi =  \| \C^{-1/2}\LAM\dd\|^2$.
%
Since for $X\sim \chi^2_p(\xi)$ it holds $X = O_P(1+\xi )$ for all $\tht_0$ and by the definition of the quantile $\Prob\big( X \leq \tau ) = 1 - \alpha$ it follows that $\tau = O(1 + \xi)$ for all $\tht_0$ as well. \\
%
Now we proceed similarly to the proof of Lemma~\ref{uniflimpiv}.
A Taylor expansion for $\|\hat\C^{-1/2}\LAM\dd \|^2$ around $\tht_0$ gives for $\OMEGA_i = \C^{-1}\X^\top\V^{-1}{\HH_i}\V^{-1}\X\C^{-1}/n$ that
\begin{align*}
\|\hat\C^{-1/2}\LAM\dd \|^2
	&= \|\C^{-1/2}\LAM\dd \|^2
		+ O_P\left\{  \sum_{i=1}^p \big( \hat\theta_i - \theta_{0,i} \big)
			\left\|   \OMEGA_i^{1/2} \LAM \dd \right\|^2  \right\} = \|\C^{-1/2}\LAM\dd \|^2
		+ O_P\big( m^{-1/2} \xi \big)
\end{align*}
for all $\tht_0$ and together with the first argument it follows that $\hat\tau = \tau + O_P(m^{-1/2}\xi)$ for all $\tht_0$.
By Lemma~\ref{uniflimpiv} it is ensured that the coverage is attained uniformly for both $\bta_0$ and $\tht_0$,
\begin{align*}
\inf_{\substack{\bta_0\in\mathbb{R}^p\\\tht_0\in\ThetaC}}\Prob_{\bta_0, \tht_0}\left\{ \sqrt{n}\left( \hat\bta_L - \bta_0 \right) \in E\left(\hat\C, \hat\tau  \right) \right\}
	&= \inf_{\substack{\dd\in\{-1,1\}^p\\\tht_0\in\ThetaC}}\Prob_{\tht_0}\left\{ \uu_\dd \in E\left(\C,\tau\right) \right\} + O(m^{-1/2}).
\end{align*}
By Lemma~\ref{minmization_wrt_d}, this minimum is in fact attained for the $\dd\in\{-1,1\}^p$ for which $\tau$ ensures nominal coverage, since $\| \C^{1/2} \uu_\dd \|^2 \sim \chi^2_{p}(\xi)$. This proves the claim.
\end{proof}

\section*{Acknowledgements}

This contribution has been partially created while the authors Peter
Kramlinger and Tatyana Krivobokova were at the University of
G\"ottingen. The authors gratefully acknowledge the funding by the
German Research Association (DFG) via Research Training Group 1644
``Scaling Problems in Statistics'' and thank anonymous referees for their insightful comments and suggestions.



\end{document}